\newcommand{\1}{\mathbbm{1}}
\theoremstyle{plain}
\newtheorem{theorem}{Theorem}[section]
\newtheorem{proposition}[theorem]{Proposition}
\newtheorem{lemma}[theorem]{Lemma}
\theoremstyle{definition}
\newtheorem{definition}[theorem]{Definition}
\newtheorem{remark}[theorem]{Remark}
\newtheorem{example}[theorem]{Example}
\newcommand\R{\mathbb{R}}
\begin{document}

\title{On a relaxation of time-varying actuator placement}

 \author{Alex Olshevsky}
 \address{Department of Electrical and Computer Engineering and Division of System Engineering, Boston University}
 \email{alexols@bu.edu}

\begin{abstract} We consider the time-varying actuator placement in continuous time, where the goal is to maximize the trace of the controllability Grammian. A natural relaxation of the problem is to allow the binary $\{0,1\}$ variable indicating whether an actuator is used at a given time to take on values in the closed interval $[0,1]$. We show that all optimal solutions of both the original and the relaxed problems can be given via an explicit formula, and that, as long as the input matrix has no zero columns, the solutions sets of the original and relaxed problem coincide. 
\end{abstract}

\maketitle

\setcounter{tocdepth}{1}



\section{Introduction} \label{intro}

\subsection{Time-Varying Actuator Placement} We consider the time-varying actuator placement problem:  informally,  given a differential equation with input, we would like to optimize some controllability-related objective while using few nonzero inputs per time step. This is motivated by scenarios where setting an input to something nonzero at a given time carries a fixed cost that can be much larger than the cost of synthesizing the input itself. Our variation of the problem is ``time-varying,'' in the sense that we allow different inputs to be nonzero at different times; this is in contrast to ``fixed'' actuator placement problems, where one has to select the same set of actuators to be nonzero across all time.

Formally, we are given a differential equation with input 
\[ \dot{x}(t) = A x(t) + B u(t), ~~~~ 0 \leq t \leq T, \] where we assume that $A \in \R^{n \times n}$ and $B \in \R^{n \times m}$ is a matrix with no zero columns\footnote{If $B$ does have zero columns, then the corresponding entry of $u(t)$ does not affect $x(t)$. Consequently, we can simply delete the nonzero columns of $B$ and reindex the vector $u(t)$.}.  Our goal is to choose a diagonal matrix $V(t)$ whose entries lie in the binary set $\{0,1\}$ optimizing some controllability-related properties of the resulting differential equation
\[ \dot{x}(t) = A x(t) + B V(t) u(t).  \] The multiplication of the input $u(t)$ by the diagonal matrix $V(t)$ can be thought of as choosing to use only certain actuators. Indeed, if $V_{ii}(t)=0$, then $u_i(t)$ has no effect on $x(t)$, and the $i$'th entry of the input is ignored at time $t$. 

Typical controllability-related objectives are usually formulated in terms of the controllability Grammian, which we  define\footnote{It would be more standard to replace $t$ by $T-t$ in the definition of the controllability Grammian, but since that definition is equivalent to the one we give with a ``flipped'' $V(t)$, we prefer to avoid dealing with $T-t$'s throughout this paper.}
\[ W_V = \int_0^T e^{A t} B V(t) V(t)^T B^T e^{A^T t} ~ dt.\] The most natural objective is perhaps to minimize ${\rm Tr}(W_V^{-1})$, which is proportional to the average energy to move from the origin to a uniformly random point on the unit sphere (see e.g., discussion in  \cite{olshevsky2016eigenvalue}). However, this function is often challenging to reason about. For example, as a consequence of \cite{olshevsky2014minimal, tzoumas2015minimal} a number of optimization problems involving ${\rm Tr}(W_V^{-1})$  are NP-hard. 

We follow several recent papers which instead consider maximization of ${\rm Tr}(W_V)$. This is because ${\rm Tr}(W_V)$ is easier to reason about and can be used to construct bounds on ${\rm Tr}(W_V^{-1})$ (see discussion in \cite{nozari2017time,nozari2019heterogeneity, ikeda2018sparsity}). 
Furthermore, we will seek to do so in the presence of an upper bound on the number of actuators used per unit time step. More formally, denoting $V(t) = {\rm diag}(v(t))$, it is typically assumed that the diagonal entries $v_i(t) \in \{0,1\}$ satisfy the constraint
\[ \int_0^T \sum_{i=1}^m |v_i(t)| \leq \alpha, \] for some  $\alpha$.  We will refer to functions $v_i(t)$ satisfying these constraints as {\em feasible}. Note that, because we have constrained $v_i(t) \in \{0,1\}$, this is the same as requiring that 
\[ \sum_{i=1}^m \mu ( \{ t: v_i(t) = 1 \} ) \leq \alpha, \] where $\mu(\cdot)$ denotes the Lebesgue measure. We will naturally assume that  $\alpha \in (0,mT)$, as otherwise the problem is trivial.

A natural relaxation of the problem is to allow each $v_i(t)$ to lie in the closed interval $[0,1]$ instead of requiring it to take on the binary values $\{0,1\}$. We will refer to this as the {\em relaxed time-varying actuator placement problem}, and the version where $v_i(t)$ are required be in $\{0,1\}$ will be referred to as the {\em original time-varying actuator placement problem}.  These definitions lead to the main question which is the concern of this work, namely {\em understanding when the optimal solutions sets of the original and relaxed problem coincide.}

\subsection{Previous work} 
Our paper is most closely related to the recent work \cite{ikeda2018sparsity}, where the same question was considered. We next give a statement of the main results of \cite{ikeda2018sparsity}. 

Let us adopt the notation $b_j$ for the $j$'th column of the matrix $B$. Further, for $i=1, \ldots, m$, we consider the functions
\begin{equation} \label{eq:fdef} f_i(t) = b_i^T e^{A^T t} e^{A t} b_i. \end{equation} 
It is then possible to give a condition in terms of the functions $f_i(t)$ for the relationship between the original and relaxed time-varying actuator placement problems. 

\begin{theorem}[\cite{ikeda2018sparsity}] \label{thm:ikeda} The optimal solution set of the relaxed problem is non-empty. Further, assuming that assuming that  $f_i(t)$ is not constant for all $i \in \{1, \ldots, m\}$, we have that: 
\begin{enumerate} \item All optimal solutions of the relaxed problem take $\{0,1\}$ values almost everywhere. 
\item The optimal solution of the the solution sets of the original and relaxed actuator scheduling problems coincide. \end{enumerate} \end{theorem}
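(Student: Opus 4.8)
The plan is to reduce everything to a one-dimensional ``bathtub''/rearrangement computation. First I would record the elementary identity
\[ {\rm Tr}(W_V) = \int_0^T \sum_{i=1}^m v_i(t)^2 f_i(t)\, dt, \]
valid because $V(t)V(t)^T$ is the diagonal matrix with entries $v_i(t)^2$. Note that $f_i(t) = \|e^{At}b_i\|^2$ is real-analytic and, as $e^{At}$ is invertible and $b_i \neq 0$ (which holds since $f_i$ is non-constant, hence not $\equiv 0$), strictly positive for every $t$; moreover, by the identity theorem for real-analytic functions, non-constancy of $f_i$ forces each level set $\{t : f_i(t) = c\}$ to be finite, hence Lebesgue-null.

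Next I would pass to the linearized functional $L(v) = \int_0^T \sum_i v_i(t) f_i(t)\, dt$ over the relaxed feasible set $K = \{v : 0 \le v_i \le 1 \text{ a.e.},\ \int_0^T \sum_i v_i \le \alpha\}$. Since $0 \le s^2 \le s$ for $s \in [0,1]$, we have ${\rm Tr}(W_V) \le L(v)$ for every $v \in K$, with equality if and only if $v$ is $\{0,1\}$-valued a.e. The core step is to maximize $L$ explicitly: the map $\theta \mapsto \sum_i \mu(\{f_i > \theta\})$ is non-increasing, continuous (by the level-set fact above), equals $mT > \alpha$ at $\theta = 0$, and tends to $0$ as $\theta \to \infty$, so there is a threshold $\theta^* > 0$ with $\sum_i \mu(\{f_i > \theta^*\}) = \alpha$. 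Put $v_i^*(t) = \1[f_i(t) > \theta^*]$. Splitting $v_i f_i = v_i(f_i - \theta^*) + \theta^* v_i$ and comparing pointwise (using $0 \le v_i \le 1$, $\theta^* > 0$, and $\int_0^T\sum_i v_i \le \alpha = \int_0^T\sum_i v_i^*$) gives $L(v) \le L(v^*)$ for every $v \in K$.

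Now the three assertions follow. Because $v^*$ is $\{0,1\}$-valued, ${\rm Tr}(W_{v^*}) = L(v^*) \ge L(v) \ge {\rm Tr}(W_V)$ for all $v \in K$, so $v^*$ is optimal for the relaxed problem; this proves non-emptiness, and since $v^*$ is $\{0,1\}$-valued with $\sum_i \mu(\{v_i^* = 1\}) = \alpha$ it is also feasible, hence optimal, for the original problem. If $v$ is any relaxed-optimal solution, then ${\rm Tr}(W_V) = L(v^*) = \max_K L \ge L(v) \ge {\rm Tr}(W_V)$, so $L(v) = {\rm Tr}(W_V)$, i.e. $\int_0^T \sum_i (v_i - v_i^2) f_i\, dt = 0$; since the integrand is non-negative and $f_i > 0$, this gives $v_i \in \{0,1\}$ a.e., which is part (1). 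Finally, every original-feasible $v$ is relaxed-feasible with the same value of the objective, so the two optimal values agree (the reverse inequality being supplied by $v^*$); combining this with part (1) and adjusting on a null set shows that each relaxed-optimal solution is original-optimal, while the reverse inclusion is immediate, so the two solution sets coincide.

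I expect the rearrangement step — establishing $L(v) \le L(v^*)$ and the exact attainment of the threshold $\theta^*$ — to be the main point, and it is precisely here that non-constancy is used: if some $f_i$ were a nonzero constant, $\theta \mapsto \sum_i \mu(\{f_i > \theta\})$ would jump, $\theta^*$ might overshoot $\alpha$, and the optimal set would pick up extra freedom on the now positive-measure level set $\{f_i = \theta^*\}$, so one would have to replace $v^*$ by an arbitrary $\{0,1\}$-valued function equal to $1$ on $\{f_i > \theta^*\}$, to $0$ on $\{f_i < \theta^*\}$, and filling an additional measure $\alpha - \sum_i\mu(\{f_i>\theta^*\})$ inside the level set. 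Observing that parts (1)--(2) survive this degeneracy — only $f_i > 0$, i.e.\ no zero column, is really needed, whereas non-constancy is what pins down the explicit formula and uniqueness — is apparently the refinement around which the rest of the paper is organized.
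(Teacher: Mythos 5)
Your proof is correct. Note that the paper does not prove Theorem~\ref{thm:ikeda} at all (it is cited from \cite{ikeda2018sparsity}); the machinery the paper actually develops is for the refined Theorem~\ref{thm:mainthm}, and there it relies on the rearrangement $F^*$ together with the Hardy--Littlewood inequality and the level-set propositions in the appendix. Your argument reaches the cited theorem by a more elementary route: instead of invoking rearrangement at all, you linearize the objective using $v_i^2\le v_i$, then maximize the linear functional $L$ by a direct threshold (``bathtub'') computation. The key simplification is that real-analytic non-constancy makes every level set Lebesgue-null, so $\theta\mapsto\sum_i\mu(\{f_i>\theta\})$ is continuous and the intermediate value theorem pins down $\theta^*$ with $\sum_i\mu(\{f_i>\theta^*\})=\alpha$ exactly; the split $v_i f_i=v_i(f_i-\theta^*)+\theta^* v_i$ then yields $L(v)\le L(v^*)$ pointwise in the first summand and via the budget constraint (and $\theta^*\ge 0$) in the second. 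Everything downstream (non-emptiness, $\{0,1\}$-valuedness of all relaxed optima via $\sum_i(v_i-v_i^2)f_i\equiv 0$ and positivity of $f_i$, and coincidence of the solution sets) follows as you write.

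The relationship to the paper is worth spelling out. Your threshold $\theta^*$ is precisely $F^*(\alpha)$ for the concatenated function $F$, and the inequality $L(v)\le L(v^*)$ is what the paper obtains more laboriously from Proposition~\ref{prop:hl} plus Lemma~\ref{lem:convex}. Your approach is cheaper because the null-level-set hypothesis removes the degeneracies; the paper needs the heavier rearrangement framework precisely to handle the case where some $f_i$ is constant, where your IVT step fails and one must settle for $\theta^*$ chosen so that $\sum_i\mu(\{f_i>\theta^*\})\le\alpha\le\sum_i\mu(\{f_i\ge\theta^*\})$ and then distribute the residual budget over the positive-measure level set $\{f_i=\theta^*\}$. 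Your final paragraph correctly identifies this as the source of non-uniqueness and as the content of Theorem~\ref{thm:mainthm}(3); indeed the paper's Remark~\ref{remark1} draws exactly the conclusion you sketch, that non-constancy buys uniqueness but only $b_i\ne 0$ (hence $f_i>0$) is needed for the two solution sets to coincide. One small stylistic point: you do not actually need $\theta^*>0$ (as opposed to $\theta^*\ge 0$) for the bathtub inequality, since the second summand is nonpositive either way; what you do need, and correctly use, is $f_i>0$ to go from $\sum_i(v_i-v_i^2)f_i=0$ a.e.\ to $v_i\in\{0,1\}$ a.e.
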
 

This theorem is an amalgamation of Theorems 1-3 in \cite{ikeda2018sparsity}. It provides an answer to the motivating concern of the present paper. Related theorems in more general settings were also proved in \cite{ikeda2018sparse} and \cite{ikeda2019sparse}. However, the condition that $f_i(t)$ are not constant is only shown to be sufficient in this theorem. As can be seen (see Section \ref{sec:ex} below), this condition is not necessary for the two solution sets to coincide. 

Moreover, in discrete-time the optimal  schedule for the original problem can be found by a greedy method (see Theorem 5 of \cite{dilip2019controllability}). This suggests it may be possible to give a characterization of the optimal schedule for the original \& relaxed problems in the continuous-time model studied in this paper. 

Our paper is also related to the works \cite{nozari2017time, nozari2019heterogeneity} which studied combinatorial implications of maximization of the trace of the controllability Grammian, relating them to quantities like centrality and communicability in graphs. Also related is \cite{bof2016role} which studied combinatorial aspects of the smallest eigenvalue of the controllability Grammian, which is a measure of the maximum control energy to go from the origin to a point on the unit sphere.

Beyond that, the the time-varying actuator placement problem is quite old; for example, a version of it dates back to a paper of Athans in 1972 \cite{athans1972determination}. There is quite a bit of recent work on understanding efficient algorithms as well as fundamental limitations for this problem. For example, fundamental limitations in terms of unavoidably large control energy have been studied in \cite{pasqualetti2014controllability, olshevsky2016eigenvalue, zhao2016gramian, klickstein2018control} among others. Algorithms for actuator placement, in either the fixed or time-varying regime, based on randomized sampling \cite{bopardikar2017sensor, jadbabaie2018deterministic, siami2018deterministic, hashemi2018randomized}, convex relaxation \cite{zare2019proximal, summers2016convex}, or greedy methods \cite{zhang2017sensor, chanekar2017optimal, mackin2018submodular, silva2019model, zhao2016scheduling, summers2019performance} were studied in  recent works.   Given the relatively large amount of work done on different versions of the problem which are not directly related to our motivating concern, we refer the reader to the above papers for a broader overview of the field.

\subsection{Our contribution and the remainder of this paper} We show that, under our assumption that $B$ has no zero columns, the optimal solution sets of the original and relaxed problems always coincide. This comes out as a byproduct of an explicit formula for the solution of the relaxed problem. In turn, this  is done by drawing a connection to (a  modification) of the classical notion of a rearrangement of a function.

In Section \ref{sec:statement}, we give a statement of our main result and illustrate it with an example. The main result itself is proved in the  Section \ref{sec:proof}. Along the way, we will need to use many properties of the rearrangement of a function;  since these proofs are very similar to existing proofs in the literature for a slightly different notion of rearrangement, they are not included in the main text but relegated to the appendix for completeness. 

\section{Statement of the main result\label{sec:statement}}

\subsection{The (asymmetric) rearrangement} We need to introduce several concepts and notations to state our main result. 

We adopt the standard notation that the indicator function $1_{\mathcal{X}}(x)$ equals one if the point $x$ belongs to the set $\mathcal{X}$ and zero otherwise. Given a Lebesgue measurable subset $\mathcal{X} \subset \R$ of the real line of finite measure, we define its  rearrangement $\mathcal{X}^*$ to be the interval $[0,l]$ whose length $l$ is the same as the Lebesgue measure of $\mathcal{X}$. As already mentioned, we adopt the convention of using $\mu({\mathcal X})$ to denote the Lebesgue measure of the set ${\mathcal X}$.

Given a measurable nonnegative function $f: [0,a] \rightarrow \R$ with bounded range, its   rearrangement $f^*: [0,a] \rightarrow \R \cup \{ +\infty \}$ is defined as 
\begin{equation} \label{eq:rearr} f^*(x) =\int_0^{\infty} \mathbbm{1}_{\{ y \in [0,a] :  f(y) > t \}^*}(x) ~ dt \end{equation}

Intuitively, the rearrangement is that it corresponds to ``sorting'' the function $f(x)$. In particular:

\begin{proposition} The rearrangement $f^*(x)$ is measurable, non-increasing and its level sets have the same measure as the level sets of $f(x)$, i.e., for all $\alpha \in \R$,  \begin{eqnarray*} \mu ( \{ f(x) \geq \alpha \} ) & = & \mu(\{ f^*(x) \geq \alpha \}) \\ 
\mu ( \{ f(x) = \alpha \} ) & = & \mu(\{ f^*(x) =\alpha \}). 
\end{eqnarray*} \label{prop:volume}
\end{proposition}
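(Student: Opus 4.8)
The plan is to express the rearrangement $f^*$ through a single auxiliary object, the distribution function
\[ \lambda_f(t) := \mu\bigl( \{ y \in [0,a] : f(y) > t \} \bigr), \qquad t \ge 0. \]
This function is finite (bounded by $a$), non-increasing in $t$, vanishes once $t$ exceeds the supremum of the range of $f$, and is right-continuous (this last point being continuity from below of $\mu$ along the increasing sets $\{f > t\} \uparrow \{f > \alpha\}$ as $t \downarrow \alpha$). By the definition of the rearrangement of a set, $\{ y \in [0,a] : f(y) > t \}^* = [0, \lambda_f(t)]$, so for every $x \ge 0$ the integrand in \eqref{eq:rearr} equals $\mathbbm{1}_{\{ y \in [0,a] : f(y) > t \}^*}(x) = \mathbbm{1}_{\{ \lambda_f(t) \ge x \}}$; integrating this indicator in $t$ (legitimate with no Fubini argument, since $\lambda_f$ is monotone hence Borel) yields the closed form
\[ f^*(x) = \mu(E_x), \qquad E_x := \{ t \ge 0 : \lambda_f(t) \ge x \}. \]

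From here I would first record two structural facts. Monotonicity of $f^*$ is immediate: $x_1 \le x_2$ forces $E_{x_2} \subseteq E_{x_1}$ and hence $f^*(x_2) \le f^*(x_1)$. And because $\lambda_f$ is non-increasing, each $E_x$ is a down-set of $[0,\infty)$ — if $t \in E_x$ and $0 \le s \le t$ then $s \in E_x$ — hence an interval of one of the forms $[0,s)$, $[0,s]$, $[0,\infty)$, $\emptyset$, so that $\mu(E_x) = \sup E_x$ with the convention $\sup \emptyset = 0$ (in particular $f^*$ is $[0,+\infty]$-valued, equal to $+\infty$ only at $x=0$).

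With the closed form the level-set identities become a short computation. For $\alpha > 0$ one has $f^*(x) \ge \alpha \iff \mu(E_x) \ge \alpha \iff E_x \supseteq [0,\alpha) \iff \lambda_f(t) \ge x$ for all $t < \alpha \iff x \le \lambda_f(\alpha^-)$, where $\lambda_f(\alpha^-) = \inf_{t < \alpha} \lambda_f(t)$ is the left limit; thus $\{ x \in [0,a] : f^*(x) \ge \alpha \} = [0, \lambda_f(\alpha^-)]$, a Borel set of measure $\lambda_f(\alpha^-)$. On the other hand $\{ f \ge \alpha \} = \bigcap_{t < \alpha} \{ f > t \}$ is a decreasing intersection of sets of finite measure, so $\mu(\{ f \ge \alpha \}) = \lim_{t \uparrow \alpha} \lambda_f(t) = \lambda_f(\alpha^-)$, matching; the cases $\alpha \le 0$ are trivial since both sets are $[0,a]$. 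A parallel computation, now invoking the right-continuity of $\lambda_f$ instead, gives $\mu(\{ x : f^*(x) > \alpha \}) = \lambda_f(\alpha^+) = \lambda_f(\alpha) = \mu(\{ f > \alpha \})$. Measurability of $f^*$ then comes for free, since $\{ f^* \ge \alpha \}$ has just been shown to be an interval for every $\alpha$; and the ``$= \alpha$'' identity follows by writing $\{ g = \alpha \} = \{ g \ge \alpha \} \setminus \{ g > \alpha \}$ for $g \in \{ f, f^* \}$ and subtracting the two (finite) measures, giving $\mu(\{ f = \alpha \}) = \lambda_f(\alpha^-) - \lambda_f(\alpha) = \mu(\{ f^* = \alpha \})$.

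The only genuinely delicate point is the bookkeeping of one-sided limits of the monotone function $\lambda_f$: the strict inequality built into $\lambda_f$ is what makes the left limit $\lambda_f(\alpha^-)$ appear in the ``$\ge \alpha$'' identity while the value $\lambda_f(\alpha)$ itself (via right-continuity) governs the ``$> \alpha$'' identity, and one must separately dispose of the degenerate range $\alpha \le 0$ and of the exceptional point $x = 0$. Everything else is routine measure theory, which is presumably why the paper defers the details to an appendix.
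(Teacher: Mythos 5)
Your proof is correct; the bookkeeping of one-sided limits is handled carefully, and the observation that $\{f^*\geq\alpha\}$ is an interval gives measurability for free. The route is genuinely different from the paper's. You factor the argument through the distribution function $\lambda_f(t)=\mu(\{f>t\})$ and the closed form $f^*(x)=\mu(\{t:\lambda_f(t)\geq x\})$, which turns $f^*$ into a generalized inverse of $\lambda_f$; equimeasurability then reduces to routine one-sided-limit calculus for the monotone function $\lambda_f$ (left limit for $\geq$, right-continuity for $>$, subtraction for $=$). The paper instead argues directly with sets: it shows that $\{x:f^*(x)>u\}$ and $\{x:f(x)>u\}^*$ differ in at most one point — by bounding $f^*(x')$ above by $u$ when $x'\notin\{f>u\}^*$, and below by $u+\epsilon$ when $x'$ is an interior point of $\{f>u\}^*$ — and then recovers the $=$ case (and implicitly $\geq$) by taking limits in $\epsilon$. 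Both proofs rest on the same geometric fact (level sets of $f^*$ are intervals based at $0$ of the right length), but yours packages it as an explicit formula for $f^*$, which is closer to the standard textbook treatment of the non-increasing rearrangement and makes the exceptional-endpoint issue disappear into a single measure-zero remark, whereas the paper's version is more self-contained and set-theoretic. One minor bonus of your write-up: the paper's final display has the difference $\mu\{f>c+\epsilon\}-\mu\{f>c-\epsilon\}$ with the terms in the wrong order (it would be nonpositive as written); your formula $\lambda_f(\alpha^-)-\lambda_f(\alpha)$ has the signs right.
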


We illustrate this with an example. 

\begin{example} Consider $f(x)=x^2$ defined on the domain $[0,1]$. In that case, $\{y \in [0,1]: f(y) > t\}$ is the set $\{ y \in [0,1]: y^2 > t\}$ which, for $t \in [0,1]$, equals $(\sqrt{t},1]$; when $t>1$, the set $\{y \in [0,1]: f(y) > t\}$ is empty. The rearrangement of $(\sqrt{t},1]$ is $[0,1-\sqrt{t}]$.
Thus, for any $x \in [0,1]$ we have that, \[ f^*(x) = \int_0^{1} \1_{[0,1-\sqrt{t}]}(x) ~dt = \int_0^{(1-x)^2} 1 ~dt = (1-x)^2. \] 
It is, of course, immediate that $f(x)=x^2$ and $f^*(x)=(1-x)^2$, when defined over the domain $[0,1]$, have level sets of the same measure. 
\end{example}

\subsection{Statement of the main result}  Our first step is to take the functions  $f_i:[0,T] \rightarrow \R, i =1, \ldots, m,$ defined in Eq. (\ref{eq:fdef}) and define  their ``concatenation'' $F(t): [0,mT] \rightarrow \R$  which consists on putting these functions ``side by side'' on the interval $[0,mT]$. Formally, 
\[ F(t) = f_i(t) \mbox{ when } t \in [(i-1)T, iT). \]

Since the functions $f_i(t)$ are clearly continuous, they have bounded range over $[0,T]$. Further, these functions are clearly nonnegative. As a result, $F(t)$ is also nonnegative and also has bounded range, and  therefore the rearrangement $F^*(t)$ is well-defined.

Our main result will be stated in terms of the function $F^*(t)$. To be rigorous, we also give a definition next of what it means for a function to be strictly decreasing from the left or the right; this definition is standard. 

\begin{definition} We will say that a function $g: [0,a] \rightarrow \R \cup \{+\infty\}$ is strictly decreasing to the right at $x_0 \in (0,a)$ if $g(x_0) > g(x_0 + \epsilon)$ for all $\epsilon$ small enough. Similarly, we will say that $g$ is strictly decreasing from the left at $x_0 \in (0,a)$ if $ g(x_0 - \epsilon) > g(x_0)$ for all $\epsilon$ small enough. 
\end{definition} 

The main result of this paper is the following theorem.

\bigskip


\begin{theorem} 

\begin{enumerate} \item Suppose $F^*(x)$ is strictly decreasing on the right at $x=\alpha$. Then the unique\footnote{Of course, we can modify any solution on a set of measure zero without any effect.  Thus, here and throughout the remainder of the paper, whenever we refer to equality of functions, we mean to up sets of zero measure.} optimal solution to the relaxed time-varying actuator placement problem is 
\[ v_i^{{\rm opt},{\rm r}}(t) = \1_{\{f_i(t) \geq F^*(\alpha)\}}. \]
\item Suppose $F^*(x)$ is strictly decreasing from the left at $x=\alpha$. Then the unique optimal solution to the relaxed time-varying actuator placement problem is 
\[ v_i^{\rm opt, l}(t) = \1_{\{f_i(t) > F^*(\alpha)\}}. \]
\item Suppose $F^*(x)$ is not strictly decreasing at $x=\alpha$ either from the left or from the right. Then the set of optimal solutions to the relaxed time-varying actuator placement problem has more than one element. However, all optimal solutions of the relaxed problem can be parametrized as  
\[ v_i^{\rm opt, nlr}(t) = \1_{\{ \tau: f_i(\tau) > F^*(\alpha) \}}(t)+ \1_{S_i}(t), \] for some sets $$S_i \subset \{ \tau :  f_i(\tau) = F^*(\alpha) \},$$ with $$\sum_{i=1}^m \mu(S_i) = \alpha - \mu(\{ \tau: F(\tau) > F^*(\alpha)\}).$$    
\end{enumerate} \label{thm:mainthm}
\end{theorem}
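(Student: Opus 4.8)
The plan is to reduce the infinite–dimensional relaxed problem to a one–dimensional ``bathtub'' problem governed entirely by the rearrangement $F^*$, and then to read off the three cases from the local behaviour of $F^*$ at $\alpha$.

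\textbf{Reduction.} Since $V(t)V(t)^T=\operatorname{diag}(v_1(t)^2,\dots,v_m(t)^2)$ and $B\operatorname{diag}(d)B^T=\sum_i d_i b_ib_i^T$, a direct computation gives
\[ \operatorname{Tr}(W_V)=\sum_{i=1}^m\int_0^T v_i(t)^2\, b_i^T e^{A^Tt}e^{At}b_i\,dt=\sum_{i=1}^m\int_0^T v_i(t)^2 f_i(t)\,dt. \]
Under the concatenation $G(t)=v_i(t-(i-1)T)$ for $t\in[(i-1)T,iT)$ — a measure–preserving bijection between tuples $(v_1,\dots,v_m)$ and functions $G\colon[0,mT]\to[0,1]$ — this becomes $\operatorname{Tr}(W_V)=\int_0^{mT}G(t)^2F(t)\,dt$, the constraint becomes $\int_0^{mT}G\le\alpha$ with $0\le G\le1$, and $G\in\{0,1\}$ a.e.\ iff every $v_i\in\{0,1\}$ a.e. Crucially, $F(t)=\|e^{At}b_i\|^2>0$ for \emph{every} $t$, because $e^{At}$ is invertible and, by hypothesis, $b_i\ne0$.

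\textbf{Value and attainment.} At any optimum the budget is tight, $\int_0^{mT}G=\alpha$: otherwise, since $F>0$ and $\alpha<mT$, one could raise $G$ on a positive–measure subset of $\{G<1\}$ and strictly improve the objective. Writing $c:=F^*(\alpha)$, we then have
\[ \int_0^{mT}G^2F\;\le\;\int_0^{mT}GF\;\le\;\int_0^{\alpha}F^*, \]
the first inequality holding pointwise because $0\le G\le1$ and $F\ge0$, the second being the bathtub principle (a consequence of the Hardy--Littlewood rearrangement inequality established in the appendix) applied to $0\le G\le1$, $\int G=\alpha$. This bound is attained: the monotonicity of $F^*$ together with Proposition~\ref{prop:volume} gives $\mu(\{F>c\})\le\alpha\le\mu(\{F\ge c\})$ (here $\alpha\in(0,mT)$ is used), so one may choose $S\subseteq\{F=c\}$ with $\mu(S)=\alpha-\mu(\{F>c\})$, and $G=\1_{\{F>c\}}+\1_S$ is feasible, $\{0,1\}$–valued, and turns both inequalities into equalities. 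Hence the relaxed optimal value is $\int_0^\alpha F^*$.

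\textbf{All optimizers.} A feasible $G$ is optimal iff both displayed inequalities are equalities. Equality in the first forces $G^2=G$ a.e.\ on $\{F>0\}=[0,mT]$, i.e.\ $G\in\{0,1\}$ a.e.; equality in the second forces $G$ to be a bathtub maximizer, i.e.\ (again by the sharp form of the rearrangement lemmas) $\{F>c\}\subseteq\{G=1\}\subseteq\{F\ge c\}$ up to null sets. Combining, the optimizers are exactly $G=\1_{\{F>c\}}+\1_S$ with $S\subseteq\{F=c\}$ and $\mu(S)=\alpha-\mu(\{F>c\})$; undoing the concatenation and using $F(\,\cdot+(i-1)T)=f_i(\cdot)$ gives $v_i(t)=\1_{\{f_i(t)>c\}}+\1_{S_i}(t)$ with $S_i\subseteq\{\tau:f_i(\tau)=c\}$ and $\sum_i\mu(S_i)=\alpha-\mu(\{\tau:F(\tau)>c\})$, which is part~(3). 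For part~(1): if $F^*$ is strictly decreasing to the right at $\alpha$, then $\mu(\{F^*\ge c\})>\alpha$ would force $F^*\equiv c$ on an interval starting at $\alpha$, so $\mu(\{F\ge c\})=\alpha$, hence $\mu(S)=\mu(\{F=c\})$, $S=\{F=c\}$ up to null sets, and the optimizer is the unique $\1_{\{F\ge c\}}$, i.e.\ $v_i^{{\rm opt},{\rm r}}=\1_{\{f_i\ge c\}}$. Symmetrically, strict decrease from the left forces $\mu(\{F>c\})=\alpha$, $\mu(S)=0$, and the unique optimizer $\1_{\{F>c\}}$. If $F^*$ is strictly decreasing at $\alpha$ from neither side, monotonicity yields a two–sided neighbourhood of $\alpha$ on which $F^*\equiv c$, so $\mu(\{F>c\})<\alpha<\mu(\{F\ge c\})$; then $\{F=c\}$ has positive measure with $0<\mu(S)<\mu(\{F=c\})$, so infinitely many admissible $S$ exist and the optimizer is not unique.

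The routine parts are the computation of $\operatorname{Tr}(W_V)$ and the bookkeeping in the three–case split. The substantive content lies in the two–step bound and its equality analysis, and the genuine obstacle is pinning down \emph{exactly} which $G$ saturate $\int GF\le\int_0^\alpha F^*$: this needs the sharp (not merely inequality) form of the rearrangement estimates, which is precisely the point at which the appendix material on rearrangements is invoked.
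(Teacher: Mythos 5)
Your proof is correct and takes essentially the same route as the paper: concatenate the $v_i$ and $f_i$ on $[0,mT]$, bound $\operatorname{Tr}(W_V)$ via the chain $\int G^2F \le \int GF \le \int_0^\alpha F^*$ (Hardy--Littlewood plus the one-dimensional ``bathtub'' step, which is the paper's Lemma~\ref{lem:convex}/\ref{lemma:ineq}), force $G\in\{0,1\}$ a.e.\ from the first inequality using $F>0$ (i.e.\ $B$ has no zero columns), and then extract the optimizer set from the sharp equality condition in Hardy--Littlewood. The only organizational difference is that you derive the general characterization of optimizers (part (3) of the theorem) first and read off parts (1) and (2) as specializations by computing $\mu(\{F>c\})$ and $\mu(\{F\ge c\})$ from the local behaviour of $F^*$ at $\alpha$, whereas the paper verifies the three cases in parallel and invokes separate integral identities (Propositions~\ref{prop:integral} and~\ref{prop:cint2}) to show the claimed solutions attain $\int_0^\alpha F^*$; your unified treatment is arguably cleaner bookkeeping but rests on the identical technical pillars.
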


\bigskip

Since all the solutions exhibited in this theorem are binary, the immediate implication is that {\em the solutions of the original and relaxed problem are always the same} (though recall we assumed that $B$ has no zero columns). 

In particular, the condition that the functions $f_i(t)$ not be constant in Theorem \ref{thm:ikeda} is not necessary, once the pathological cases where $B$ has a zero column are ruled out. However, as we explain below in Remark \ref{remark1}, the non-constancy of $f_i(t)$ is nevertheless natural condition in this context, as it ensures that the solution of the relaxed problem is unique. 

The novelty of this theorem is two-fold. First, it shows that we can write down the optimal solution(s) of the relaxed problem via an explicit formula. Second, any desired properties of the optimal solution(s) can now be simply ``read off'' the formula. Moreover, once the notion of the rearrangement has been introduced, the theorem is quite intuitive: informally, it says that we have to take the ``top slice'' of the functions $f_1(t), \ldots, f_m(t)$. 

Finally, we mention that $F^*(t)$ can be strictly  decreasing from both the left and the right at $t=\alpha$, in which case both parts (1) and (2) of the theorem apply. There is no contradiction there, since, in this scenario, the two formulas for the solution differ only on a set of measure zero.


\begin{remark} A natural concern is whether the solutions above exhibit Zeno phenomena, i.e., whether they involve infinitely many switches in a finite interval. In the first two cases, when the function $F^*(t)$ is strictly decreasing from the left or right, this does not occur. This is because the functions $f_i(t)$ are analytic everywhere, and consequently the sets $\{ t: f_i(t) = F^*(\alpha) \}$ either equal $[0,T]$ or have finite cardinality. Thus one will either take $v_i(t)=1$ on all $[0,T]$, or $v_i(t)=0$ on all $[0,T]$, or $v_i(t)$ will switch finitely many times between $0$ and $1$ over the interval $[0,T]$. 

In the third case, the situation is slightly more complicated. In the event that some $f_i(t)$ is constant, the set $\{ t: f_i(t) = F^*(\alpha)\}$ may equal all of $[0,T]$, and an optimal solution might then correspond to choosing a subset of $[0,T]$. In that case, it is certainly possible to find an optimal solution that makes infinitely many switches in finite time. However, its also possible to pick the subset $S$ in the theorem statement to be an interval, avoiding this phenomenon.

To summarize, in the first two cases, Zeno phenomena do not occur; and in the third cases, optimal solutions avoiding Zeno phenomena always exist. 
\end{remark} 


\subsection{An example} \label{sec:ex}

 Consider the dynamical system 
\[ \dot{x} = 
\left( \begin{array}{ccc} 
0 & 0 & 0 \\
0 &  0 & 0 \\
0 & 0 & 1
\end{array} \right) x + 
\left( 
\begin{array}{ccc}
\gamma & 0 & 1 \\ 
0 & \gamma & 1 \\
0 & 0 & 1
\end{array}
\right) u \]

In this case, it is easy to compute the controllability Grammian explicitly. Indeed, 
\begin{eqnarray*} W_V & = & \int_0^T e^{A t} B V(t) V(t) B^T e^{A^T t} ~dt \\ 
& = & \int_0^T  v_1^2(t)\left( \begin{array}{ccc} 
\gamma^2  & 0 & 0 \\
0 &  0 & 0 \\
0 & 0 & 0
\end{array} \right) + 
v_2^2(t) \left( \begin{array}{ccc} 0 & 0 & 0 \\
0 & \gamma^2  & 0 \\
0 & 0 & 0
\end{array} \right) + 
v_3^2(t)\left( \begin{array}{ccc} 1 & 1 & e^t \\
1 & 1 & e^t \\
e^t & e^t & e^{2t}
\end{array} \right) ~~ dt
\end{eqnarray*} 
so that 
\begin{equation} \label{eq:trexp} {\rm Tr}(W_V) = \int_0^T v_1^2(t) \gamma^2 + v_2^2(t) \gamma^2 + v_3^2(t) (2 + e^{2t}) ~dt. \end{equation} Suppose we consider this problem over the interval $t \in [0,2]$ with the constraint 
\[ \sum_{i=1}^3 \int_0^T |v_i(t)| ~dt \leq 2. \] In other words, we stipulate that on average a single actuator per unit time is used.  
\begin{figure}
\includegraphics[width = 1.0\textwidth]{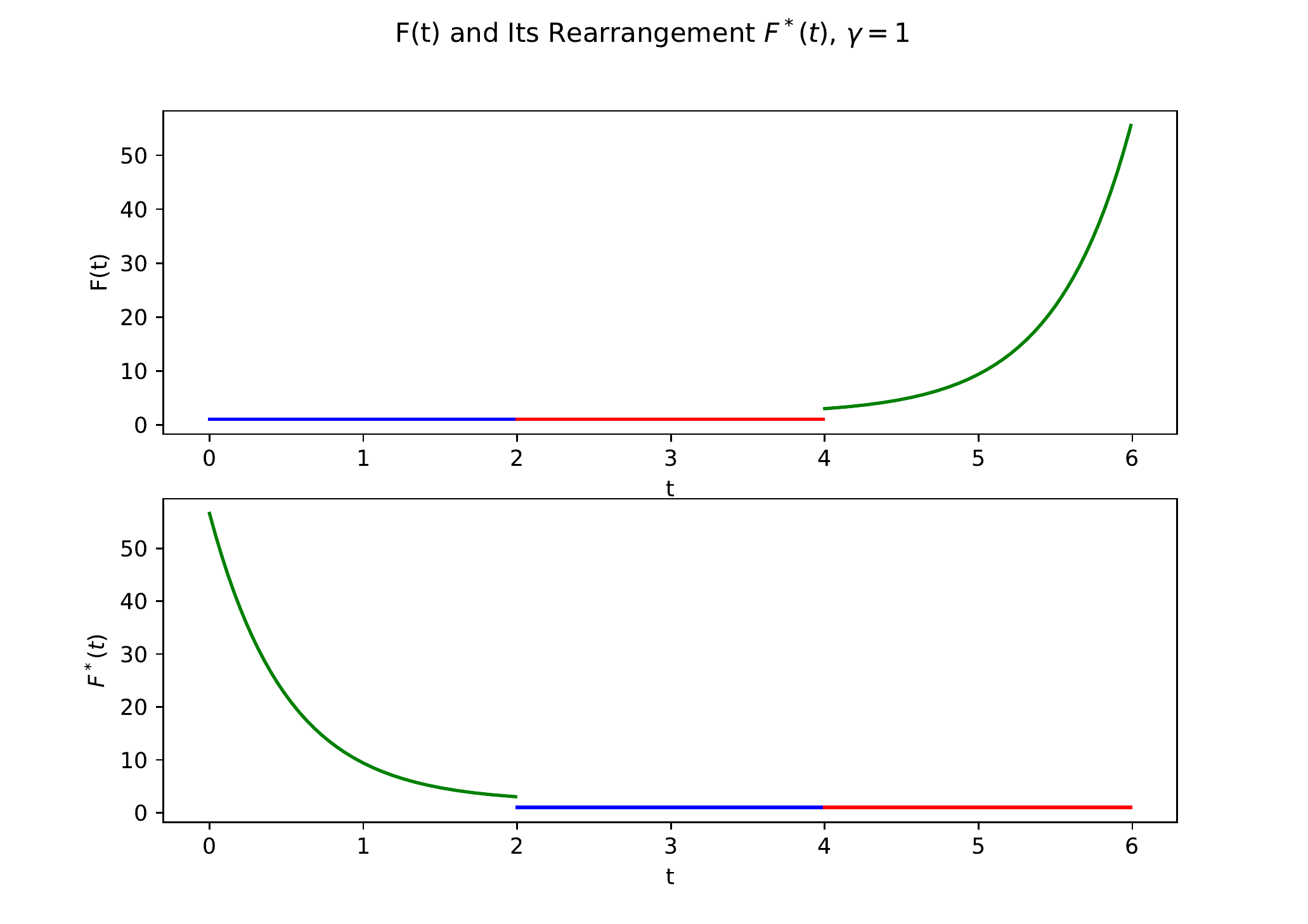}
\caption{The top graph shows the function $F(t)$ and the bottom graph shows the function $F^*(t)$, both when $\gamma=1$. The optimal actuator strategy is to take $v_i(t) = \1_{\{f_i(t) \geq F^*(2)\}}$ which selects only the third actuator.}
\end{figure}

\begin{figure}
\includegraphics[width = 1.0\textwidth]{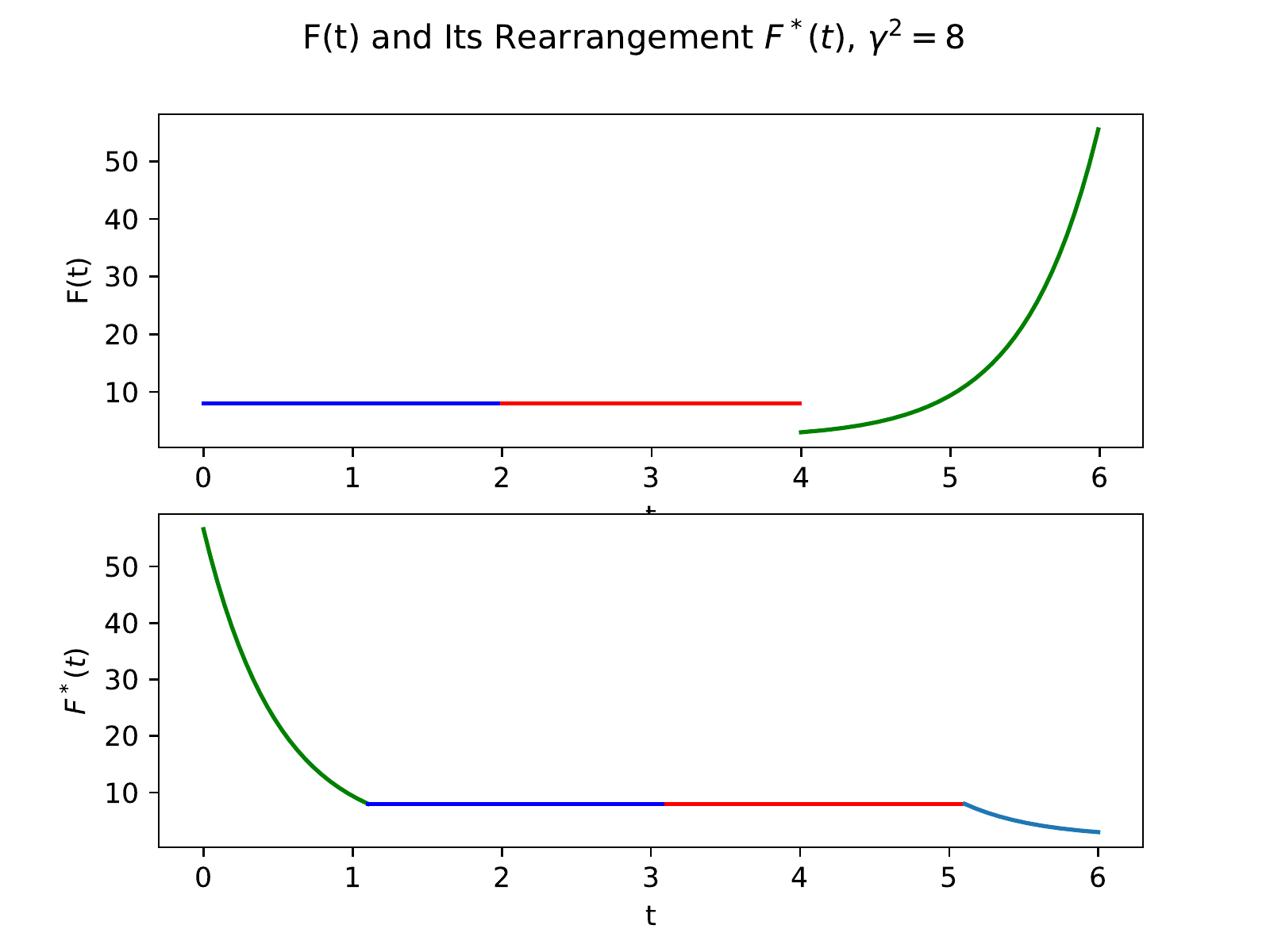}
\caption{The top graph shows the function $F(t)$ and the bottom graph shows the function $F^*(t)$ when $\gamma^2=8$. The optimal actuator strategy of taking $v_i(t) = \1_{\{f_i(t) \geq F^*(2)\}}$ forces us to select the third actuator for a period of time, and the remaining time can be split between the first and second actuator.}
\end{figure}

Let us consider several cases. We will make a number of assertions about the optimal solution; all of these follow immediately by inspection of Eq. (\ref{eq:trexp}), from which the optimal $V(t)$ can simply be read off. 

Let us consider first $\gamma=0$. In that case, we should set $v_1(t)=v_2(t)=0$, and set $v_3(t)=1$, for all $t \in [0,2]$. This optimal solution is, of course, unique. 

By contrast, suppose $\gamma^2 = 2 + e^{2}$. In that case,
the optimal solution will set $v_3(t)=1$ for $t \in [1,2]$, and we will have $v_1(t) = v_2(t)=0$ over the same interval. However, over $t \in [0,1]$, the optimal solution has $v_3(t)=0$. As for $v_1(t)$ and $v_2(t)$ over $t \in [0,1]$, they can be set to any functions with values in $\{0,1\}$ satisfying  
\[ \int_0^1 v_1(t) + v_2(t) = 1. \] In particular, we can decompose $[0,1] = S_1 \cup S_2$ for some disjoint $S_1,S_2$ and set $v_1(t) = \1_{S_1}(t), v_2(t) = \1_{S_2}(t)$. We see that the solution is not unique. 


Inspection of Eq. (\ref{eq:trexp}) reveals that the transition from having a unique solution to multiple solutions occurs when $\gamma^2 = 2$. This is exactly the point at which $F^*(t)$ goes from being strictly decreasing from the left at the point $t=2$ to being constant over an interval containing that point.  For an illustration, we refer the reader to Figures 1 and 2; the first figure shows a value of $\gamma$ for which $F^*(t)$ is strictly decreasing to the left at $t=2$, while the second figure shows how increasing $\gamma$ results in $F^*(t)$ which is flat over an interval containing $t=2$.

\section{Proof of the main result\label{sec:proof}}

In this section, we will prove our main result, Theorem \ref{thm:mainthm}. We begin by stating several properties of the rearrangement which we will find useful as propositions. 

 The  propositions below hold for all functions $f(x)$ and $g(x)$ such that their rearrangements can be defined, i.e., these functions must be nonnegative and have bounded range. Their proofs are fairly standard. Indeed, it is common in the literature to deal with the ``symmetric non-increasing rearrangement'' in which $f^*(x)$ is further constructed to be symmetric. For such a notion of rearrangement, the proofs of these facts appear in many places; a standard reference is Chapter 3 of the textbook \cite{lieb2001analysis}, which provides hints for many of these. Since our notion of rearrangement is a little different, we  provide the proofs of these propositions for completeness in the appendix. 
 
\begin{proposition}[Conservation of the $L^1$ norm] $$\int_0^a f(x) ~dx = \int_0^a f^*(x) ~dx $$ \label{prop:onenorm}
\end{proposition}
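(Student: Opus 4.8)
The plan is to reduce everything to the layer-cake (Cavalieri) representation of the integral of a nonnegative function together with the obvious fact that rearranging a set does not change its measure. Recall that for any nonnegative measurable $h:[0,a]\to\R$ one has $\int_0^a h(x)\,dx = \int_0^\infty \mu(\{x\in[0,a]:h(x)>t\})\,dt$; I would state this as the starting point (it is itself an instance of Tonelli applied to $\1_{\{h(x)>t\}}$ on $[0,a]\times[0,\infty)$).

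First I would apply this to $h=f$ to get $\int_0^a f(x)\,dx = \int_0^\infty \mu(\{y\in[0,a]:f(y)>t\})\,dt$. Next, starting directly from the definition \eqref{eq:rearr}, I would integrate $f^*$ over $[0,a]$ and swap the order of integration:
\[
\int_0^a f^*(x)\,dx = \int_0^a\!\int_0^\infty \1_{\{y\in[0,a]:f(y)>t\}^*}(x)\,dt\,dx = \int_0^\infty\!\int_0^a \1_{\{y\in[0,a]:f(y)>t\}^*}(x)\,dx\,dt = \int_0^\infty \mu\big(\{y\in[0,a]:f(y)>t\}^*\big)\,dt.
\]
Since, by the definition of the rearrangement of a set, $\mathcal{X}^*=[0,\mu(\mathcal{X})]$ has the same measure as $\mathcal{X}$, the last integrand equals $\mu(\{y\in[0,a]:f(y)>t\})$, and comparing with the previous display gives the claim. (One may alternatively invoke Proposition \ref{prop:volume} in place of this last elementary observation.)

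The only point needing genuine care — and the step I expect to be the main, if minor, obstacle — is justifying the interchange of integrals via Tonelli, which requires the joint measurability of $(x,t)\mapsto \1_{\{y\in[0,a]:f(y)>t\}^*}(x)$ on $[0,a]\times[0,\infty)$. For this I would note that $\{y\in[0,a]:f(y)>t\}^* = [0,\phi(t)]$ where $\phi(t):=\mu(\{y\in[0,a]:f(y)>t\})$ is non-increasing (hence Borel measurable) in $t$, so the set $\{(x,t): 0\le x\le \phi(t)\}$ is the subgraph of a measurable function and therefore measurable; the integrand is its indicator. Boundedness of the range of $f$ guarantees $\phi(t)=0$ for $t$ large, so $f^*$ is finite a.e.\ and all the integrals above are finite, so no issue with $\infty-\infty$ arises.
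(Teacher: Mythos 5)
Your proof is correct and takes essentially the same route as the paper's: both reduce the statement to the layer-cake representation together with an interchange of integrals, the only cosmetic difference being that you expand $\int_0^a f^*$ directly from the defining formula (\ref{eq:rearr}) and use the trivial fact that $\mu(\mathcal{X}^*)=\mu(\mathcal{X})$, whereas the paper applies the layer-cake formula to $f^*$ as well and matches level-set measures via Proposition~\ref{prop:volume}. You are also somewhat more careful in justifying the Tonelli interchange (joint measurability of the subgraph of the non-increasing function $\phi$), a point the paper leaves implicit.
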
 

\begin{proposition}[Hardy-Littlewood Inequality] $$\int_0^a f(x) g(x) ~dx \leq \int_0^a f^*(x) g^*(x) ~ dx$$ Moreover, we have equality if and only if for almost all $s,t$, we have that 
\[  \mu( \{ x: f(x) > t \} \cap \{ x: g(x) > s \}) = \min \left( \mu( \{ x: f(x) > t \}, \mu  (\{ x: g(x) > s \}) \right). \]
\label{prop:hl}
\end{proposition}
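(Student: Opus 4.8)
The plan is to run the classical ``layer-cake'' argument, which fits naturally here since the rearrangement in \eqref{eq:rearr} is itself assembled out of superlevel sets. For a nonnegative function one has the pointwise identity $f(x)=\int_0^\infty \1_{\{f>t\}}(x)\,dt$, and likewise for $g$. Multiplying the two representations, integrating over $[0,a]$, and applying Tonelli's theorem (every integrand in sight is nonnegative and jointly measurable), I get
\[ \int_0^a f(x)g(x)\,dx \;=\; \int_0^\infty\!\!\int_0^\infty \mu\big(\{f>t\}\cap\{g>s\}\big)\,ds\,dt, \]
and the identical computation applied to $f^*,g^*$ gives $\int_0^a f^* g^*\,dx=\int_0^\infty\!\int_0^\infty \mu(\{f^*>t\}\cap\{g^*>s\})\,ds\,dt$. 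So the whole statement reduces to comparing, for each fixed $(s,t)$, the two inner integrands.

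Two observations then finish the inequality. First, trivially $\mu(A\cap B)\le\min(\mu(A),\mu(B))$ for any measurable sets $A,B$ of finite measure. Second --- and this is the only place the rearrangement does any work --- since $f^*$ and $g^*$ are non-increasing on $[0,a]$ by Proposition~\ref{prop:volume}, each of their superlevel sets $\{f^*>t\}$ and $\{g^*>s\}$ is an interval with left endpoint $0$ (possibly empty), so any two such sets are nested; hence $\mu(\{f^*>t\}\cap\{g^*>s\})=\min(\mu(\{f^*>t\}),\mu(\{g^*>s\}))$. Substituting $\mu(\{f^*>t\})=\mu(\{f>t\})$ and $\mu(\{g^*>s\})=\mu(\{g>s\})$ (again Proposition~\ref{prop:volume}), the $f^*,g^*$ integrand equals $\min(\mu(\{f>t\}),\mu(\{g>s\}))$, which dominates $\mu(\{f>t\}\cap\{g>s\})$ pointwise in $(s,t)$. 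Integrating over $(s,t)$ yields $\int_0^a fg\,dx\le\int_0^a f^* g^*\,dx$.

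The equality clause then comes essentially for free. Subtracting the two double integrals,
\[ \int_0^a f^* g^*\,dx-\int_0^a fg\,dx=\int_0^\infty\!\!\int_0^\infty\Big[\min\big(\mu(\{f>t\}),\mu(\{g>s\})\big)-\mu\big(\{f>t\}\cap\{g>s\}\big)\Big]\,ds\,dt, \]
an integral of a nonnegative function. Both sides are finite because $f$ and $g$ have bounded range on the finite-measure interval $[0,a]$, so the bracketed integrand is supported on a bounded rectangle in the $(s,t)$-plane and bounded by $a$ there. Hence the left-hand side vanishes if and only if the bracket is zero for almost every $(s,t)$, which is precisely the stated condition $\mu(\{f>t\}\cap\{g>s\})=\min(\mu(\{f>t\}),\mu(\{g>s\}))$.

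The only point needing genuine care, I expect, is the measure-theoretic book-keeping: checking joint measurability of $(x,t)\mapsto\1_{\{f>t\}}(x)$ to license Tonelli, and observing that replacing $>$ by $\ge$ in any superlevel set alters its measure for only countably many thresholds and so changes none of the integrals (this is what lets me invoke Proposition~\ref{prop:volume}, which is phrased with $\ge$). These details are standard and would go in the appendix; all the substance is in the two observations above.
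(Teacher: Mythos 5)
Your proof is correct and follows essentially the same route as the paper's: layer-cake decomposition into a double integral over superlevel-set intersections, the trivial bound $\mu(A\cap B)\le\min(\mu(A),\mu(B))$, the observation that superlevel sets of non-increasing rearrangements are nested intervals anchored at $0$, and equality of level-set measures from Proposition~\ref{prop:volume}. Your handling of the equality clause (subtracting the double integrals and noting the integrand is nonnegative) is a slightly more explicit phrasing of what the paper does by "inspecting where the inequality appears," and your worry about $>$ versus $\ge$ in Proposition~\ref{prop:volume} is unnecessary since the $>$ version follows immediately by subtracting the two displayed identities there; but neither affects correctness.
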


\begin{proposition}[Monotonicity] If $f(x) \leq g(x)$ for all $x$, then $f^*(x) \leq g^*(x)$ for all $x$. In particular, since a constant function is the rearrangement of itself, if $f(x) \leq 1$ for all $x \in [0,a]$, then $f^*(x) \leq 1$ for all $x \in [0,a]$.\label{prop:bound}
\end{proposition}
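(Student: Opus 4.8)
The plan is to argue directly from the definition of the rearrangement in Eq.~(\ref{eq:rearr}), using only that the set rearrangement $\mathcal{X}\mapsto\mathcal{X}^*$ and Lebesgue integration are both monotone.

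First I would record the nesting of superlevel sets implied by the hypothesis: if $f(x)\le g(x)$ for all $x\in[0,a]$, then for every $t\in\R$,
\[ \{\, y\in[0,a] : f(y) > t \,\} \subseteq \{\, y\in[0,a] : g(y) > t \,\}. \]
Taking Lebesgue measures gives $\mu(\{f>t\})\le\mu(\{g>t\})$, and since by definition the rearrangement of a measurable set of finite measure is the interval $[0,\mu(\cdot)]$, we get $\{f>t\}^* = [0,\mu(\{f>t\})] \subseteq [0,\mu(\{g>t\})] = \{g>t\}^*$. Consequently, for each fixed $x\in[0,a]$ and each $t$,
\[ \1_{\{\,y\in[0,a]: f(y) > t\,\}^*}(x) \;\le\; \1_{\{\,y\in[0,a]: g(y) > t\,\}^*}(x). \]
Both sides are nonnegative and measurable in $t$ (which is exactly the input needed for $f^*$ and $g^*$ to be well-defined, as we are assuming), so integrating this pointwise inequality over $t\in[0,\infty)$ and invoking monotonicity of the integral yields $f^*(x)\le g^*(x)$ for every $x\in[0,a]$. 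This is the first claim.

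For the second claim I would first verify that a constant function $c\,\1_{[0,a]}$ is its own rearrangement: its superlevel set $\{y\in[0,a]: c>t\}$ equals $[0,a]$ when $t<c$ and is empty when $t\ge c$, so $\{c>t\}^*=[0,a]$ for $t<c$ and is empty otherwise, whence for all $x\in[0,a]$
\[ \int_0^\infty \1_{\{c>t\}^*}(x)\,dt \;=\; \int_0^c 1 \, dt \;=\; c. \]
Applying the first claim with $g\equiv 1$ then gives: if $f(x)\le 1$ on $[0,a]$, then $f^*(x)\le 1$ on $[0,a]$, as desired.

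I do not anticipate any genuine obstacle in this argument; the proof is essentially a one-line application of monotonicity of measure followed by monotonicity of the integral. The only points meriting a sentence of care are that the superlevel-set inclusion must be stated over the common index set $[0,a]$ (so no spurious mass is introduced on either side), and that the ambiguity of whether the rearranged intervals are open or closed at their endpoints is immaterial since it affects only sets of measure zero.
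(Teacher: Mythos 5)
Your proof is correct and takes essentially the same approach as the paper: both arguments proceed directly from the definition in Eq.~(\ref{eq:rearr}), use the nesting of superlevel sets $\{f>t\}\subseteq\{g>t\}$ to conclude $\{f>t\}^*\subseteq\{g>t\}^*$, and integrate the resulting pointwise inequality of indicators over $t$, together with a direct verification that a constant is its own rearrangement. Your write-up actually makes explicit the step (nesting of the rearranged intervals $[0,\mu(\cdot)]$) that the paper leaves implicit.
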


\begin{proposition}[Integral identity] Suppose  $b < a$. Then, if $f^*(x)$ is strictly decreasing on the right at $x=b$, we have that \label{prop:integral} 
\[ \int_0^a f(x) \1_{\{ f(x) \geq f^*(b) \} } ~dx = \int_0^{b} f^*(x) ~ dx
\] On the other hand, if $f^*(x)$ is strictly decreasing from the left at $x=b$, then 
\[ \int_0^a f(x) \1_{\{ f(x) > f^*(b) \} } ~dx = \int_0^{b} f^*(x) ~ dx
\]
\end{proposition}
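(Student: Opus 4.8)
The plan is to reduce both displayed equalities to a single common expression computed via the layer-cake (Cavalieri) formula. The starting point is the elementary pointwise identity: for any constant $c \ge 0$ and any $x$,
\[ f(x)\,\1_{\{f(x)\ge c\}} = (f(x)-c)_+ + c\,\1_{\{f(x)\ge c\}}, \qquad f(x)\,\1_{\{f(x)> c\}} = (f(x)-c)_+ + c\,\1_{\{f(x)> c\}}, \]
where $(y)_+ := \max(y,0)$; both are verified by checking the three cases $f(x)>c$, $f(x)=c$, $f(x)<c$. Integrating over $[0,a]$ and using $\int_0^a (f-c)_+\,dx = \int_0^\infty \mu(\{f>c+t\})\,dt = \int_c^\infty \mu(\{f>t\})\,dt$, one obtains the master identity
\[ \int_0^a f\,\1_{\{f\ge c\}}\,dx = c\,\mu(\{f\ge c\}) + \int_c^\infty \mu(\{f>t\})\,dt, \]
together with its verbatim analogue with every ``$\ge c$'' replaced by ``$> c$''. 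Since $f^*$ is itself nonnegative with bounded range (boundedness of $f$ forces boundedness of $f^*$, so in particular $f^*(b)$ is a genuine finite number), these identities hold with $f$ replaced by $f^*$ as well, and all integrals appearing are finite.

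Next I would use the strict-decrease hypothesis, together with the fact that $f^*$ is non-increasing, to identify the relevant super-level sets of $f^*$. If $f^*$ is strictly decreasing on the right at $b$, then for every $x>b$ one can choose $\epsilon$ with $0<\epsilon<x-b$ small enough that $f^*(b+\epsilon)<f^*(b)$, whence $f^*(x)\le f^*(b+\epsilon)<f^*(b)$; since also $f^*(x)\ge f^*(b)$ for $x\le b$, this shows $\{x\in[0,a]:f^*(x)\ge f^*(b)\}=[0,b]$, so $\mu(\{f^*\ge f^*(b)\})=b$ and $\int_0^b f^*\,dx = \int_0^a f^*\,\1_{\{f^*\ge f^*(b)\}}\,dx$. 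If instead $f^*$ is strictly decreasing from the left at $b$, the same argument applied to $\{f^*>f^*(b)\}$ gives $\{x\in[0,a]:f^*(x)>f^*(b)\}=[0,b)$, hence $\mu(\{f^*>f^*(b)\})=b$ and $\int_0^b f^*\,dx=\int_0^a f^*\,\1_{\{f^*>f^*(b)\}}\,dx$. In both cases, plugging $c=f^*(b)$ into the master identity for $f^*$ yields
\[ \int_0^b f^*\,dx = b\,f^*(b) + \int_{f^*(b)}^\infty \mu(\{f^*>t\})\,dt. \]

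Finally I would evaluate the left-hand sides of the proposition the same way: the master identity for $f$ with $c=f^*(b)$ gives $\int_0^a f\,\1_{\{f\ge f^*(b)\}}\,dx = f^*(b)\,\mu(\{f\ge f^*(b)\}) + \int_{f^*(b)}^\infty \mu(\{f>t\})\,dt$, and likewise with ``$>$''. Proposition~\ref{prop:volume} furnishes $\mu(\{f\ge f^*(b)\})=\mu(\{f^*\ge f^*(b)\})=b$ in the right case and, after subtracting the identity for the sets $\{\,\cdot = f^*(b)\,\}$, also $\mu(\{f> f^*(b)\})=\mu(\{f^*> f^*(b)\})=b$ in the left case, as well as $\mu(\{f>t\})=\mu(\{f^*>t\})$ for every $t$. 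Substituting, both the relevant left-hand side of the proposition and $\int_0^b f^*\,dx$ equal $b\,f^*(b)+\int_{f^*(b)}^\infty \mu(\{f>t\})\,dt$, which proves the claim.

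The only step I expect to require genuine care is the identification of the super-level sets in the second paragraph: one must invoke ``strictly decreasing on the right'' (resp. ``from the left'') precisely to exclude a plateau of $f^*$ at height $f^*(b)$ that extends beyond $b$ (resp. to guarantee that any such plateau lies in $[b,a]$ and so is invisible to $\int_0^b f^*$). One should also be mindful that Proposition~\ref{prop:volume} is phrased for sets $\{\,\cdot\ge\alpha\,\}$, so the strict-inequality measure identities used in the layer-cake term have to be obtained by differencing. Everything else is routine layer-cake bookkeeping.
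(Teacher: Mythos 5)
Your proof is correct, and it takes a slightly different route from the paper's, though the three essential ingredients are the same (the layer-cake representation, Proposition~\ref{prop:volume} for equimeasurability of level sets, and the strict-decrease hypothesis to identify $\{f^*\ge f^*(b)\}$ or $\{f^*>f^*(b)\}$ as an interval of length $b$). The paper applies the layer-cake formula directly to the product $f(x)\1_{\{f(x)\ge f^*(b)\}}$, rewriting $\{f\1_{\{f\ge c\}}\ge y\}$ as $\{f\ge\max(y,c)\}$, switching the measure computation to $f^*$ via Proposition~\ref{prop:volume}, and then using strict decrease to clip the $x$-integration to $[0,b]$. You instead first split $f\1_{\{f\ge c\}}=(f-c)_+ + c\1_{\{f\ge c\}}$ and layer-cake only the $(f-c)_+$ piece, arriving at the master identity $\int_0^a f\1_{\{f\ge c\}}\,dx=c\,\mu(\{f\ge c\})+\int_c^\infty\mu(\{f>t\})\,dt$ and its $>$-analogue. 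This has a real advantage in exposition: it isolates the ``plateau'' contribution $c\,\mu(\{\,\cdot\,\})$ from the tail term $\int_c^\infty\mu(\{f>t\})\,dt$, and since the tail term only involves strict super-level sets, both the right-case and left-case expressions collapse to the single common value $b\,f^*(b)+\int_{f^*(b)}^\infty\mu(\{f>t\})\,dt$, making it immediately visible why $\ge$ with right-decrease and $>$ with left-decrease give the same answer. The paper's one-pass calculation is more compact but has to run the argument separately for each case. Your handling of the super-level sets of $f^*$ (using monotonicity plus the one-sided strict decrease to pin down $\{f^*\ge f^*(b)\}=[0,b]$ and $\{f^*>f^*(b)\}=[0,b)$) and the remark that $\mu(\{f>t\})=\mu(\{f^*>t\})$ follows from Proposition~\ref{prop:volume} by differencing are both accurate and fill in exactly the details one needs to be careful about.
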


\begin{proposition}[Integral identity] Let $(b_l,b_u)$ be the largest open interval containing $b$ on which  the function $f^*(x)$ is constant. Then if $S$ is a subset of the set $\{ x: f(x) = f^*(b)\}$, we have that for any $\delta \in \R$,
\[ \int_0^a f(x) \left( \1_{\{x: f(x) > f^*(b)\}} + \delta \1_{S} \right) ~dx
 = \delta  f^*(b)  \mu (S) + \int_0^{b_l} f^*(x) ~dx.
\] \label{prop:cint2}
\end{proposition}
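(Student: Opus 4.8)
The plan is to split the left-hand integral into the two pieces $\int_0^a f(x)\,\1_{\{x:\,f(x)>f^*(b)\}}\,dx$ and $\delta\int_0^a f(x)\,\1_S(x)\,dx$, and handle them separately. The second piece is immediate: since $S\subseteq\{x:f(x)=f^*(b)\}$ we have $f\,\1_S=f^*(b)\,\1_S$, so this piece equals $\delta\, f^*(b)\,\mu(S)$, which is precisely the first term of the right-hand side. Everything therefore reduces to proving the single identity
\[
\int_0^a f(x)\,\1_{\{x:\,f(x)>f^*(b)\}}\,dx \;=\; \int_0^{b_l} f^*(x)\,dx .
\]

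Write $c=f^*(b)$ for the constant value of $f^*$ on $(b_l,b_u)$. The first step is the measure identity $\mu(\{x:f(x)>c\})=b_l$. Because $f^*$ is non-increasing, equals $c$ on $(b_l,b_u)$, and $(b_l,b_u)$ is by hypothesis the \emph{largest} open interval containing $b$ with this property, a short argument shows that $f^*(x)>c$ for every $x\in[0,b_l)$ and $f^*(x)\le c$ for every $x\in(b_l,a]$: if $f^*(x_0)\le c$ for some $x_0<b_l$, then monotonicity forces $f^*\equiv c$ on $[x_0,b_u)$, so $f^*$ is constant on the strictly larger open interval $(x_0,b_u)\ni b$, a contradiction. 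Hence $\{x:f^*(x)>c\}$ agrees with $[0,b_l)$ up to at most the single point $b_l$, so $\mu(\{f^*>c\})=b_l$, and then $\mu(\{f>c\})=b_l$ by Proposition~\ref{prop:volume} (the ``$>$'' statement follows by subtracting its two displayed equalities).

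For the remaining identity I would apply the layer-cake representation $\int_0^a h=\int_0^\infty\mu(\{h>t\})\,dt$, valid for any nonnegative $h$ of bounded range, to each side. Taking $h=f\,\1_{\{f>c\}}$, one checks $\{h>t\}=\{f>c\}$ for $0\le t\le c$ and $\{h>t\}=\{f>t\}$ for $t>c$, whence the left side equals $c\,b_l+\int_c^\infty\mu(\{f>t\})\,dt$. Taking $h=f^*\,\1_{[0,b_l]}$ and using the description of $\{f^*>c\}$ from the previous step, one gets $\mu(\{x\in[0,b_l]:f^*(x)>t\})=b_l$ for $t<c$ and $=\mu(\{f^*>t\})=\mu(\{f>t\})$ for $t\ge c$, so the right side is also $c\,b_l+\int_c^\infty\mu(\{f>t\})\,dt$. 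Comparing finishes the proof.

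The one delicate point I expect is the endpoint bookkeeping: one must extract the \emph{strict} inequality $f^*(x)>c$ for $x<b_l$ (not merely $f^*(x)\ge c$) from maximality of the flat interval, and verify that the undetermined values of $f^*$ at the isolated points $b_l,b_u$ never influence any measure or integral above. An essentially equivalent alternative would be to prove $(f\,\1_{\{f>c\}})^*=f^*\,\1_{\{f^*>c\}}$ directly from the definition \eqref{eq:rearr} and then invoke Proposition~\ref{prop:onenorm}; I would carry out the layer-cake version since it makes this endpoint analysis the most transparent.
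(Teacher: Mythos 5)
Your proof is correct and follows essentially the same route as the paper's: split off the $\delta\1_S$ piece (immediate from $f\equiv f^*(b)$ on $S$), then handle the remaining term via the layer-cake representation together with Proposition~\ref{prop:volume} and the observation that $\{x:f^*(x)>f^*(b)\}$ is $[0,b_l)$ up to an endpoint. The paper's own proof is just a terser pointer to the calculation in Proposition~\ref{prop:integral} with $f^*(x)>f^*(b)$ replaced by $x<b_l$; your version makes the same endpoint analysis explicit, which is the only delicate part.
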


With these propositions in place, we turn to our first lemma, which introduces some notation pertaining to functions $f^*(x)$ which are not strictly decreasing from either direction at a point.

\begin{lemma} Suppose $f: [0,a] \rightarrow \R$ and $f^*(x)$ is not decreasing from either the left or the right at the point $x=b$ with $b \in (0,a)$.  Then there exists an open interval $(b^l, b^r)$ containing $b$ such that
\begin{enumerate} \item $f^*(x)$ is constant on this interval.
\item $(b^l, b^r)$ is the largest open interval containing $b$ with this property. 
\item 
\begin{eqnarray*} 
\mu (\{ x: f(x) > f^*(b) \} ) & = & b^l \\ 
\mu ( \{ x: f(x) = f^*(b) \} ) & = & b^r - b^l \\
\mu ( \{ x: f(x) \geq f^*(b) \} ) & = & b^r
\end{eqnarray*}
\end{enumerate} \label{lem:lrc}
\end{lemma}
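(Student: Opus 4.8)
\textbf{Proof proposal for Lemma \ref{lem:lrc}.}

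The plan is to work directly from the definition of the rearrangement and the standard fact (Proposition \ref{prop:volume}) that level sets are preserved. First I would set $c := f^*(b)$ and recall that $f^*$ is non-increasing on $[0,a]$. Since $f^*$ is not strictly decreasing at $b$ from the left, there is a point $b - \eta < b$ with $f^*(b-\eta) = f^*(b) = c$; since $f^*$ is non-increasing, $f^*$ is constant equal to $c$ on the whole interval $[b-\eta, b]$. Similarly, failure of strict decrease from the right gives $\eta' > 0$ with $f^*$ constant equal to $c$ on $[b, b+\eta']$. Hence $f^*$ is constant on a neighborhood of $b$, which establishes the existence of \emph{some} open interval with property (1). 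Now let $(b^l, b^r)$ be the union of all open intervals containing $b$ on which $f^*$ equals $c$; this is itself an open interval (a union of overlapping intervals), $f^*$ equals $c$ on it by construction, and it is the largest such interval, giving (2). Monotonicity of $f^*$ also forces $f^*(x) > c$ for $x < b^l$ and $f^*(x) < c$ for $x > b^r$ (strictly, at least just past the endpoints, since otherwise the interval could be enlarged; more precisely, $f^*(x)\le c$ fails to the left and $f^*(x)\ge c$ fails to the right in any larger window).

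For part (3), the key is to translate measures of level sets of $f$ into those of $f^*$ via Proposition \ref{prop:volume}, and then compute the latter from the fact that $f^*$ is non-increasing. For the first identity: $\{x : f^*(x) > c\}$ is an initial segment $[0, \beta)$ (or $[0,\beta]$) of $[0,a]$ because $f^*$ is non-increasing; since $f^* \equiv c$ on $(b^l, b^r)$ and $f^* > c$ is impossible at or beyond $b^l$ while $f^*(x) > c$ should hold for all $x < b^l$, this segment is exactly $[0, b^l)$, so $\mu(\{f^* > c\}) = b^l$. By Proposition \ref{prop:volume}, $\mu(\{f > c\}) = \mu(\{f^* > c\}) = b^l$, which is the first line (note $\{f(x) > f^*(b)\} = \{f(x) > c\}$). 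The same argument applied to $\{x : f^*(x) \geq c\}$ shows this is the segment $[0, b^r)$ — it contains all of $(b^l,b^r)$ where $f^* = c$, and $f^*(x) < c$ for $x > b^r$ — so $\mu(\{f \geq c\}) = b^r$, the third line. The middle line then follows by subtraction, using $\mu(\{f = c\}) = \mu(\{f \geq c\}) - \mu(\{f > c\}) = b^r - b^l$, which is also exactly what Proposition \ref{prop:volume} gives directly.

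The main obstacle, though a minor one, is bookkeeping at the endpoints $b^l$ and $b^r$: deciding whether the segments are half-open or closed, and confirming that $f^*$ is genuinely $> c$ everywhere on $[0,b^l)$ and genuinely $< c$ everywhere on $(b^r, a]$ rather than merely equal to $c$ on a larger set. This is handled by maximality of $(b^l, b^r)$: if $f^*(x_0) = c$ for some $x_0 < b^l$, then by monotonicity $f^* \equiv c$ on $[x_0, b^l]$ and hence on $(x_0 - \epsilon, b^r)$ for small $\epsilon$, contradicting that $(b^l, b^r)$ is the largest such interval; symmetrically on the right. Since all the measure identities are unaffected by the status of the single endpoints (sets of measure zero), the precise open/closed nature of these segments is immaterial to the conclusion, and I would simply remark this rather than track it carefully.
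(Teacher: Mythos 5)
Your argument is correct and follows essentially the same route as the paper: use monotonicity to deduce that failure of strict decrease from both sides forces constancy on a neighborhood of $b$, take the maximal such interval, and then invoke Proposition \ref{prop:volume} to translate the level-set measures of $f$ into those of $f^*$, which are read off directly from the nonincreasing structure. (One small slip: in the maximality argument you write that $f^* \equiv c$ on $(x_0 - \epsilon, b^r)$, but nothing justifies extending left of $x_0$; the open interval $(x_0, b^r)$ already strictly contains $(b^l, b^r)$ and yields the contradiction without the $\epsilon$.)
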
 

\begin{proof} Since $f^*(x)$ is nonincreasing, it is immedate that if it is not strictly decreasing to the left or to the right at $x=b$, then it must be constant on some interval $(l,r)$ containing $b$. We can then define $a^l$ to be infimum of all $l$ such that $(l,r)$ is an interval containing $b$ on which $f^*(x)$ is constant; defining $b^r$ similarly, we obtain that $(b^l,b^r)$ is the largest open interval containing $b$ where $f^*(x)$ is constant. This proves parts (1) and (2). 

For part (3), we have that by Propdsition \ref{prop:volume},
\begin{eqnarray*} \mu (\{ x: f(x) > f^*(b) \} ) & = & \mu (\{ x: f^*(x) > f^*(b) \} ) \\
& = & \mu([0,b^l)) \mbox{ or } \mu([0,b^l]) \\ 
& = & b^l
\end{eqnarray*} and the proof of the second and third identities of item (3) proceed similarly. 

\end{proof}

Our next lemma is a straightforward generalization,  to the continuous space, of the fact that the largest convex combination of a set of numbers (subject to constraint on how big the weights can be) puts as much weight as possible on the largest  numbers. We present it without proof. 

\begin{lemma} Let $g(t): [0,a] \rightarrow \R$ be a nonincreasing function and suppose $b \in (0,a)$. Then,
\[ \max_{\gamma(t) \in [0,1], \int_0^a \gamma(t) ~dt = b} ~ \int_0^a  \gamma(t) g(t) ~dt \leq \int_0^b g(t) ~dt. \] Moreover,
\begin{enumerate} \item If $g(x)$ is either decreasing to the right or from the left at $x=b$, then the maximum is uniquely achieved by the function $\gamma(t)=\1_{[0,b]}(t)$.
\item If $g(x)$ is not decreasing from the right or the left at $t=b$, let $(b^l,b^r)$ be the open interval guaranteed by Lemma \ref{lem:lrc}. Then the functions which achieve the maximum are 
\[ \gamma^{\rm opt}(t) = \1_{[0,b^l)\}}(t) + \lambda(t) \1_{Q}(t), \]
where 
\[ Q \subset [b^l, b^r],\] and 
\[ \int_Q \lambda(t) ~dt = b - b^l. \]
\end{enumerate} \label{lem:convex}
\end{lemma}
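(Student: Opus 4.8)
The plan is to reduce Lemma~\ref{lem:convex} to a pointwise rearrangement-style argument and then extract the equality conditions from the Hardy--Littlewood inequality. First I would dispose of the inequality itself: given any feasible $\gamma(t)\in[0,1]$ with $\int_0^a\gamma(t)\,dt=b$, I would compare $\int_0^a\gamma(t)g(t)\,dt$ with $\int_0^b g(t)\,dt = \int_0^a \1_{[0,b]}(t)g(t)\,dt$. Since $g$ is nonincreasing it equals its own rearrangement, and $\1_{[0,b]}$ is the rearrangement of any indicator of a set of measure $b$; more directly, I would write the difference as $\int_0^a \bigl(\1_{[0,b]}(t)-\gamma(t)\bigr)g(t)\,dt$ and split the domain at $t=b$. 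On $[0,b]$ the factor $\1_{[0,b]}-\gamma = 1-\gamma\ge 0$ and $g(t)\ge g(b)$; on $[b,a]$ the factor is $-\gamma\le 0$ and $g(t)\le g(b)$. Hence
\[
\int_0^a\bigl(\1_{[0,b]}(t)-\gamma(t)\bigr)g(t)\,dt \;\ge\; g(b)\int_0^a\bigl(\1_{[0,b]}(t)-\gamma(t)\bigr)\,dt \;=\; g(b)\,(b-b)\;=\;0,
\]
using that $\int_0^a\gamma = b = \int_0^a\1_{[0,b]}$. This proves the stated bound; alternatively one can simply invoke Proposition~\ref{prop:hl} with $f=\1_{[0,b]}$, whose rearrangement is itself, yielding $\int \gamma g \le \int \1_{[0,b]}^* g^* = \int_0^b g$.

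For the equality analysis I would track when the displayed chain of inequalities is tight. Equality forces $\bigl(\1_{[0,b]}(t)-\gamma(t)\bigr)\bigl(g(t)-g(b)\bigr)=0$ for a.e.\ $t$: wherever $g(t)>g(b)$ we need $\gamma(t)=1$, and wherever $g(t)<g(b)$ we need $\gamma(t)=0$. Since $g$ is nonincreasing, $\{g>g(b)\}$ and $\{g<g(b)\}$ are (up to null sets) intervals $[0,c_l)$ and $(c_r,a]$ for some $c_l\le c_r$, and $\{g=g(b)\}=[c_l,c_r]$. In case (1), where $g$ is strictly decreasing from the left or the right at $b$, the level set $\{g=g(b)\}$ has measure zero (strict decrease from one side at $b$ forces $c_l=b$ or $c_r=b$, and combined with monotonicity the flat piece through $b$ is degenerate), so $\gamma$ is forced to be $1$ on $[0,b)$ and $0$ on $(b,a]$ up to a null set, i.e.\ $\gamma=\1_{[0,b]}$ a.e.; moreover this $\gamma$ does attain the bound since on $[0,b]$ we have $g=g^*=g$ and the Hardy--Littlewood bound is met, establishing both optimality and uniqueness.

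In case (2), Lemma~\ref{lem:lrc} (applied to $g$, which is its own rearrangement, so $g^*=g$) gives the maximal flat interval $(b^l,b^r)$ around $b$ with $\mu(\{g>g(b)\})=b^l$ and $\mu(\{g\ge g(b)\})=b^r$. The equality conditions above then force $\gamma=1$ on $[0,b^l)$ and $\gamma=0$ on $(b^r,a]$, while on the flat interval $[b^l,b^r]$ no constraint is imposed by the integrand since $g\equiv g(b)$ there; the only remaining requirement is the measure constraint $\int_0^a\gamma=b$, which reads $b^l+\int_{[b^l,b^r]}\gamma(t)\,dt = b$. Writing $Q=\{t\in[b^l,b^r]:\gamma(t)>0\}$ and $\lambda=\gamma|_Q$ gives precisely the claimed parametrization $\gamma^{\rm opt}=\1_{[0,b^l)}+\lambda\1_Q$ with $\int_Q\lambda = b-b^l$; conversely every such $\gamma$ is feasible and attains the bound because its value on $[0,b^l)$ agrees with $\1_{[0,b]}$ where $g>g(b)$ and its total integral against the constant $g(b)$ on the flat part contributes exactly $g(b)(b-b^l)$, matching $\int_{b^l}^b g$. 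I expect the main obstacle to be the bookkeeping around null sets and the precise meaning of ``strictly decreasing from the left or right at $b$'' in case (1): one must argue carefully that a one-sided strict decrease at $b$, together with monotonicity, really does force the level set $\{g=g(b)\}$ to be null, so that the forced conditions pin down $\gamma$ uniquely rather than leaving a positive-measure ambiguity.
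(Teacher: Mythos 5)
The paper states Lemma~\ref{lem:convex} without proof, so there is no official argument to compare against; the question is simply whether your proposal is correct. Your proof of the inequality is correct and clean: writing the difference as $\int_0^a(\1_{[0,b]}-\gamma)(g-g(b))\,dt$ and noting the integrand is pointwise nonnegative on both $[0,b]$ and $(b,a]$ is exactly the right elementary argument, and the equality condition $(\1_{[0,b]}-\gamma)(g-g(b))=0$ a.e.\ is the right starting point for the case analysis. (The parenthetical remark that one can instead ``simply invoke Proposition~\ref{prop:hl}'' is not really a shortcut: with $g^*=g$ it yields only $\int\gamma g\le\int\gamma^* g$, and you then still need the same elementary comparison of the nonincreasing function $\gamma^*$ against $\1_{[0,b]}$.)

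The gap is in case (1). You claim that if $g$ is strictly decreasing from one side at $b$, then the level set $\{g=g(b)\}$ is null, and you flag this as the delicate step you would need to verify. In fact the claim is false: the one-sided definitions only pin down one endpoint of the flat interval. For instance, $g=\1_{[0,b]}$ is nonincreasing and strictly decreasing to the right at $b$ (since $g(b)=1>0=g(b+\eps)$), yet $\{g=g(b)\}=[0,b]$ has measure $b>0$. So the pointwise equality conditions alone do \emph{not} pin $\gamma$ down on $\{g=g(b)\}$, and the plan of ``arguing carefully that the level set is null'' cannot succeed because the statement you would be trying to prove is wrong.

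The fix is short but is a different mechanism than the one you describe: a one-sided strict decrease at $b$ forces $b$ to be an \emph{endpoint} of the flat interval $[c_l,c_r]=\{g=g(b)\}$ (namely $c_r=b$ when decreasing to the right, $c_l=b$ when decreasing from the left). Then the \emph{integral constraint} $\int_0^a\gamma=b$, together with $\gamma\in[0,1]$, determines $\gamma$ on the flat part. If $c_r=b$, the equality conditions give $\gamma=1$ on $[0,c_l)$ and $\gamma=0$ on $(b,a]$, and $\int_{c_l}^{b}\gamma=b-c_l=\mu([c_l,b])$ with $\gamma\le1$ forces $\gamma=1$ a.e.\ on $[c_l,b]$. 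If $c_l=b$, then $\gamma=1$ on $[0,b)$ and $\int_{b}^{c_r}\gamma=0$ with $\gamma\ge0$ forces $\gamma=0$ a.e.\ on $[b,c_r]$. Either way $\gamma=\1_{[0,b]}$ a.e. Your treatment of case (2) already uses exactly this ``residual budget on the flat interval'' reasoning, so case (1) should be handled the same way rather than by a (false) measure-zero claim; with that replacement the proof is complete.
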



We next exploit Lemma \ref{lem:convex} by applying it to the rearrangement of a function $f^*(x)$, which of course is nonincreasing. The result is stated as the following lemma. 

\begin{lemma} Let $a,b$ be scalars satisfying $b \leq a$ and  and let us define $\mathcal{B}$ as the set of nonnegative functions $\beta(t):[0,a] \rightarrow [0,1]$  with $$ \int_{0}^a \beta(t) ~dt \leq b.$$ Then 
\begin{equation} \label{eq:lem:ineq} \max_{\beta \in \mathcal{B}} \int_0^a \beta(t) f(t) ~ dt \leq \int_0^b f^*(t) ~dt. \end{equation} Moreover:
\begin{enumerate} \item If $f^*(t)$ is strictly decreasing on the right at $t=b$, then the unique $\beta(t)$ which achieves  this maximum is $\beta(t) = \1_{\{t: f(t) \geq f^*(b)\}}$. 
\item If $f^*(t)$ is strictly decreasing from the left at $t=b$, then the unique $\beta(t)$ which achieves this maximum is $\beta(t) = \1_{\{t: f(t) > f^*(b)\}}.$ 
\item If $f^*(t)$ is neither strictly decreasing from the left nor from the right at $t=b$, then the $\beta(t)$ which take values in $\{0,1\}$ almost everywhere which achieve this maximum are 
\[ \beta(t) = \1_{\{t: f(t) > f^*(b)\}} + \1_Q, \] where \[ Q \subset \{ t: f(t) = f^*(b)\} \] with \[ \mu(Q) = b - \mu ( \{ t: f(t) > f^*(b) \} ). \]
\end{enumerate} \label{lemma:ineq}
\end{lemma}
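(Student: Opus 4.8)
The plan is to transfer the problem over $\mathcal{B}$ — functions $\beta$ on $[0,a]$ with values in $[0,1]$ and $\int_0^a\beta \le b$ — to the problem over nonincreasing ``weight functions'' $\gamma$ handled by Lemma \ref{lem:convex}, using the Hardy--Littlewood inequality (Proposition \ref{prop:hl}) as the bridge. First I would prove the inequality \eqref{eq:lem:ineq}: for any $\beta\in\mathcal{B}$, since $\beta$ takes values in $[0,1]$ we have $\beta^*\le 1$ by Proposition \ref{prop:bound}, and $\int_0^a\beta^* = \int_0^a\beta\le b$ by Proposition \ref{prop:onenorm}; hence $\beta^*$ is admissible as a $\gamma$ in Lemma \ref{lem:convex} (after noting $\int\gamma=b$ can be relaxed to $\le b$ with $g=f^*$ nonincreasing and nonnegative — actually $f\ge0$ so $f^*\ge0$, so shrinking the mass of $\gamma$ only decreases the objective, and one can always pad $\beta$ up to total mass exactly $b$ if needed). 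Then Hardy--Littlewood gives $\int_0^a \beta(t) f(t)\,dt \le \int_0^a \beta^*(t) f^*(t)\,dt \le \int_0^b f^*(t)\,dt$, the last step by Lemma \ref{lem:convex} applied with $g=f^*$.

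Next I would establish the equality cases. For the ``only if'' direction in cases (1) and (2), suppose $\beta$ achieves the maximum. Then both inequalities above are equalities: $\int \beta^* f^* = \int_0^b f^*$ forces (by the uniqueness clause of Lemma \ref{lem:convex}, part (1)) that $\beta^* = \1_{[0,b]}$, since $f^*$ is strictly decreasing at $b$ from the appropriate side; and equality in Hardy--Littlewood forces, for a.e.\ $s,t$, the measure identity $\mu(\{f>t\}\cap\{\beta>s\}) = \min(\mu\{f>t\},\mu\{\beta>s\})$. Combined with $\beta^*=\1_{[0,b]}$ — which says $\mu\{\beta>s\}=b$ for $s<1$ and $=0$ for $s\ge1$, so $\beta\in\{0,1\}$ a.e.\ and $\mu\{\beta=1\}=b$ — the nesting condition says $\{\beta=1\}$ and $\{f>t\}$ are nested (one contains the other up to null sets) for a.e.\ $t$. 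Taking $t\uparrow F^*(b)$ and $t\downarrow F^*(b)$ through the threshold $f^*(b)$, together with Proposition \ref{prop:volume} / Lemma \ref{lem:lrc} identifying $\mu\{f>f^*(b)\}$ and $\mu\{f\ge f^*(b)\}$, pins down $\{\beta=1\}$ up to a null set: in case (1), where $\mu\{f \ge f^*(b)\} = b$ (strict decrease on the right), we get $\beta = \1_{\{f\ge f^*(b)\}}$; in case (2), where $\mu\{f>f^*(b)\}=b$ (strict decrease from the left), we get $\beta=\1_{\{f>f^*(b)\}}$. The ``if'' direction is a direct computation via Proposition \ref{prop:integral}: plugging these indicators in yields exactly $\int_0^b f^*$.

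For case (3), with $(b^l,b^r)$ the maximal flat interval of $f^*$ around $b$ from Lemma \ref{lem:lrc}, the ``if'' direction again follows by direct computation from Proposition \ref{prop:cint2} (with $\delta=1$, $S=Q$): the value is $f^*(b)\mu(Q) + \int_0^{b^l}f^* = f^*(b)(b-b^l) + \int_0^{b^l}f^* = \int_0^b f^*$ using flatness on $[b^l,b^r]$. For the ``only if'' direction among $\{0,1\}$-valued maximizers, the same Hardy--Littlewood equality/nesting argument shows $\{\beta=1\}$ must contain $\{f>f^*(b)\}$ and be contained in $\{f\ge f^*(b)\}$ up to null sets, with total measure $b$; writing $Q := \{\beta=1\}\setminus\{f>f^*(b)\}$ gives the claimed form with $\mu(Q) = b - \mu\{f>f^*(b)\}$.

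The main obstacle I anticipate is the bookkeeping in extracting the precise a.e.\ structure of $\{\beta=1\}$ from the Hardy--Littlewood equality condition: the condition is stated ``for almost all $s,t$,'' so one must argue carefully that the nesting of $\{\beta>s\}$ with $\{f>t\}$ for a.e.\ pair, combined with the monotone (in $s$) family $\{\beta>s\}$ and the known value of $\beta^*$, genuinely forces $\beta$ to be an indicator of a specific set modulo null sets — ruling out, e.g., $\beta$ taking intermediate values on a positive-measure set, or $\{\beta=1\}$ straddling $\{f = f^*(b)\}$ in the wrong way. Handling the $s$-limits and $t$-limits through the critical level, and invoking right/left strict decrease to control the endpoint behavior, is where the care is needed; everything else is an application of the already-stated propositions.
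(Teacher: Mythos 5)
Your proposal follows essentially the same route as the paper's proof: bound via Hardy--Littlewood and conservation of the $L^1$ norm to reduce to Lemma \ref{lem:convex}, then use the uniqueness clause of that lemma to force $\beta^* = \1_{[0,b]}$, and finally exploit the Hardy--Littlewood equality (nesting) condition together with Lemma \ref{lem:lrc} and Proposition \ref{prop:volume} to pin down $\{\beta=1\}$ in each of the three cases. The bookkeeping concern you flag at the end is exactly the part where the paper spends its careful nested-level-set argument, so there is no gap.
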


\begin{proof}  Observe we only need to prove Eq. (\ref{eq:lem:ineq}) with an inequality rather than equality, since by Proposition \ref{prop:integral} and Proposition \ref{prop:cint2}, the functions $\beta(t)$ given in items (1)-(3) make the left-hand side of Eq. (\ref{eq:lem:ineq}) equal to the right-hand side.

Indeed, using Proposition \ref{prop:hl} we have that for any  $\beta(t) \in \mathcal{B}$, 
\[ \int_0^a \beta(t) f(t) ~ dt \leq \int_0^{a} \beta^*(t) f^*(t) ~ dt. \] 
By Proposition \ref{prop:onenorm} we have that $\beta^*$ integrates to $b$ over $[0,a]$ just like $\beta$.  By Proposition \ref{prop:bound}, we have that $\beta^*(t) \leq 1$. Moreover, since the rearrangement of a function is always nonnegative as an immediate consequence of Eq. (\ref{eq:rearr}), we also have that $\beta^*(t)$ is nonnegative.  Thus: 
\[ \int_0^a \beta(t) f(t) ~ dt \leq \max_{\gamma(t) \in [0,1], ~\int_0^a \gamma(t) \leq b} ~~\int_0^a \gamma(t) f^*(t) ~dt. \] 
However, since $f^*(t)$ is nonincreasing by Proposition \ref{prop:bound}, by Lemma \ref{lem:convex} we must have that $\gamma^{\rm opt}(t) =  1_{[0,b]}(t)$. Thus 
\[ \int_0^a \beta(t) f(t) ~ dt \leq \int_0^{a} \1_{[0,b]}(t) f^*(t) ~ dt. \] Since this is true for any $\beta(t) \in \mathcal{B}$, this proves Eq. (\ref{eq:lem:ineq}).

It remains to analyze the case of equality. Suppose $f^*(t)$ is strictly decreasing from either the left or the right at $t=b$. Then by Lemma \ref{lem:convex} the optimal $\gamma(t)$ is unique, and therefore we must have $\beta^*(t) =  \1_{[0,b]}(t)$. In particular, this implies that $\beta(t)$ is the indicator function of an a set of Lebesgue measure $b$.  Let $S$ denote that set. 

From the equality conditions of Proposition \ref{prop:hl}, we need that for almost all $s,t$, 
\[ \mu (\{x: f(x) > t \} \cap \{ x: \1_S(x) \geq s \}) = \min ( \mu(\{x: f(x) > t \}, \mu(\{x: \1_S(x) \geq s\} ). \] This holds for allmost all $s,t$ if and only if it holds for allmost all $t$ and $s=1$, i.e., if \[ \mu ( \{ x : f(x) > t) \cap S ) = \min (\mu ( \{ x: f(x) > t \}), \mu(S)) \] for almost all $t$. For this to hold, we must have that for almost any level set of $\{ x: f(x) > t\}$,  up to a set of zero measure, $S$ is either contained in, or contains, that level set. Next, we analyze each case in the lemma statement separately. 

Fist suppose suppose $f^*(x)$ is strictly decreasing to the right at $x=b$. We will next argue that, up to a set of zero measure, we must have $S = \{ x: f(x) \geq f^*(b)\}$. Indeed, consider the level sets $\{x : f(x) \geq f^*(b)-\epsilon\}$. The measure of each of these sets is at least $b$. Since the measure of $S$ is exactly $b$, it follows that $S$ must be contained in almost all of these sets. Since these sets are nested, $S$ is in fact contained in all of them. So $S$ is contained in their intersection, which is $\{ x: f(x) \geq f^*(b)\}$. However, the latter set has measure $b$ by our assumption that $f^*(x)$ is decreasing to the right at $x=b$ (since $\mu (\{ x: f(x) \geq f^*(b)\}) = \mu( \{ x: f^*(x) \geq f^*(b)\})$ by Proposition \ref{prop:volume}, and the latter quantity equals $b$ since $f^*$ is decreasing to the right at $b$). Since $S$ also has measure $b$, it follows that $S$ actually equals $\{ x: f(x) \geq f^*(b)\}$ up to a set of zero measure. We have thus concluded the proof in the case when $f^*(x)$ is strictly decreasing to the right at $x=b$.

Next, suppose $f^*(x)$ is strictly decreasing from the left at $x=b$. We employ a nearly identical argument as the previous paragraph: we will argue that, up to a set of zero measure, we must have $S = \{ x: f(x) > f^*(b)\}$. Indeed, consider the level sets $\{x : f(x) \geq f^*(b)+\epsilon\}$. The measure of each of these sets is at most $b$. Consequently, $S$ contains almost all of them. Since these sets are nested, $S$ in fact contains all of them. So $S$ contains their union, which is $\{ x: f(x) > f^*(b)\}$. However, the latter set has measure $b$ by our assumption that $f^*(x)$ is decreasing from the left  at $x=b$. Since $S$ also has measure $b$, it follows that $S$ actually equals $\{ x: f(x) > f^*(b)\}$ up to a set of zero measure.  

Finally, suppose $f^*(x)$ is neither strictly increasing from the left nor from the right.  Lemma \ref{lem:lrc} ensures the  existence of the largest open $(b^l,b^r)$ containing $b$ on which $f^*(x)$ is constant. By Lemma \ref{lem:convex}, the set of optimal  $\gamma(t)$  equals
\[ \gamma^{\rm opt}(t) = \1_{[0,b^l)}(t) + \lambda(t) 1_{Q'}(t), \] for some subset $Q' \subset [b^l, b^r]$ with 
\begin{eqnarray*} \int_{Q'} \lambda(t) ~dt & = & b - b^l \\ 
& = & b -  \mu ( \{ x: f(x) > f^*(b) \} ),
\end{eqnarray*} with the second equality following by Lemma \ref{lem:lrc}. 

Since, in this part of the lemma, we only discuss  $\beta(t)$ taking on values in $\{0,1\}$, we must have that $\beta(t)$ is a binary function. We thus have that 
\[ \beta^*(t) = \1_{[0,b^l)}(t) + 1_{Q''}(t) \] for some subset $Q'' \subset [b^l, b^r]$ with 
\[ \mu(Q'') = b-b^l = \mu ( \{ x: f(x) > f^*(b) \} ). \] In particular, this means that $\beta(t)$ must be the indicator function of a set $S$ of measure $b = b^l + (b-b^l)$.

The rest of the proof proceeds similarly. We obtain, just as above, that for almost all level sets $\{ x: f(x) > c \}$, the set $S$ must contain, or be contained, in that level set. We then argue that since for $\epsilon>0$, \begin{eqnarray*} 
\mu ( \{ x: f(x) > f^*(b) - \epsilon \} )
& \geq & \mu ( \{ x: f(x) \geq f^*(b) \} ) \\ 
& = & b^r \\ 
& > & b
\end{eqnarray*} we have that that $S$ must be contained in  the sets 
$\{ x: f(x) > f^*(b) - \epsilon \}$ for almost all $\epsilon$. Since these sets are nested, $S$ is in fact contained in all of them, and, finally, it is contained in their intersection. Consequently, we have 
\[ S \subset \{ x: f(x) \geq f^*(b) \}. \] 

A similar argument gives that $\{ x: f(x) > f^*(b) \} \subset S$. Indeed, Lemma \ref{lem:lrc} tells us that for any $\epsilon > 0$, 
\begin{eqnarray*} 
\mu ( \{ x: f(x) > f^*(b) + \epsilon \} )
& \leq & \mu ( \{ x: f(x) > f^*(b) \} ) \\ 
& = & b^l \\ 
& < & b
\end{eqnarray*}  so that $\{ x: f(x) > f^*(b) + \epsilon\} \subset S$ for almost all $\epsilon$. Because these sets are nested, in fact all of them are contained in $S$; and consequently their union, which is $\{ x: f(x) > f^*(b) \}$ is contained in $S$. 

In summary, we have shown that 
\[ S = \{ x: f(x) > f^*(b) \} \cup \mbox{ some subset of } \{ x: f(x) = f^*(b) \} 
\] Let $Q$ be the ``some subset'' in the above equation.
We have that 
\[ b = \mu(S) = b^l + \mu(Q),\] 
which finally gives that 
\[ \mu(Q) = b-b^l = b - \mu ( \{ x: f(x) > f^*(b) \}),\] and the proof is finished.

\end{proof} 

With this last lemma in place, we can now turn to the proof of our main result.

\begin{proof}[Proof of Theorem \ref{thm:mainthm}] We first argue that the solutions we have proposed are feasible. First, uppose that $F^*(x)$ is strictly decreasing from the right at $x=\alpha$. Then, we have
\begin{eqnarray*} \sum_{i=1}^m \int_0^T |v_i^{\rm opt, r}(t)| & = & \sum_{i=1}^m \mu(\{ f_i(t) \geq F^*(\alpha)\}) \\
& = & \mu (\{F(t) \geq F^*(\alpha)\}) \\ 
& = & \mu (\{F^*(t) \geq F^*(\alpha)\}) \\ 
& = & \alpha,
\end{eqnarray*} where the last step follows from the assumption that $F^*$ is decreasing to the right at $\alpha$. 

Similarly, suppose $F^*(x)$ is strictly decreasing from the left at $x=\alpha$. Then
\begin{eqnarray*} \sum_{i=1}^m \int_0^T |v_i^{\rm opt, l}(t)| & = & \sum_{i=1}^m \mu(\{ f_i(t) > F^*(\alpha)\}) \\
& = & \mu (\{F^*(t) > F^*(\alpha)\}) \\ 
& = & \alpha,
\end{eqnarray*} where the last step follows from the assumption that $F^*$ is strictly decreasing from the left at $\alpha$. 

Finally, suppose $F^*(x)$ is not strictly decreasing from either the left or the right at $x=\alpha$. Then 

\begin{eqnarray*} \sum_{i=1}^m \int_0^T |v_i^{\rm opt, nlr}(t)| & = & \sum_{i=1}^m \mu(\{ f_i(t) > F^*(\alpha)\}) + \mu(S) \\
& = & \mu (\{F^*(t) > F^*(\alpha)\}) + (\alpha - \mu (\{F^*(t) > F^*(\alpha)\})) \\ 
& = & \alpha,
\end{eqnarray*}

Since it is immediate that all $v_i^{\rm opt, r}(t), v_i^{\rm opt, l}(t), v_i^{\rm opt, nlr} \in [0,1]$, we conclude that in all cases the proposed solutions are feasible. 

Let us adopt the notation $J_{v^{\rm opt, r}}$ for the cost corresponding to the functions $v_i^{\rm opt, r}$.  Let us compute this cost under the assumption that $F^*(t)$ is decreasing from the right at $t=\alpha$. We have that
\begin{eqnarray}  J_{v^{\rm opt, r}} & = &  {\rm Tr} \int_0^T \sum_{i=1}^m e^{A t} b_i  \1_{\{f_i(t) \geq F^*(\alpha)\}} b_i^T e^{A^T t} ~ dt \nonumber \\ 
& = &  \int_0^T \sum_{i=1}^m   \1_{\{f_i(t) \geq F^*(\alpha)\}} b_i^T e^{A^T t} e^{A t} b_i  ~ dt \nonumber \\ 
& = &  \int_0^T \sum_{i=1}^m  \1_{\{f_i(t) \geq F^*(\alpha)\}} f_i(t) ~ dt \nonumber \\ 
& = & \int_{0}^{mT} F(t) \1_{\{F(t) \geq F^*(\alpha)\}} ~dt \nonumber \\
& = & \int_0^{\alpha} F^*(t) ~dt,\label{eq:optcost} \end{eqnarray}  and the last step used Proposition \ref{prop:integral} and the assumption that $F^*(t)$ is strictly decreasing from the right at $t=\alpha$. Thus we have shown that the  choice of functions $v_i^{\rm opt, r}(t)$ achieves a cost of $\int_0^{\alpha} F^*(t) ~dt$.

We next argue that, under the assumption that $F^*(t)$ is strictly decreasing from the left at $t=\alpha$, the functions $v_i^{\rm opt, l}(t)$ achieve the same cost via a nearly identical argument:
\begin{eqnarray}  J_{v^{\rm opt, l}} & = &  {\rm Tr} \int_0^T \sum_{i=1}^m e^{A t} e_i  \1_{\{f_i(t) > F^*(\alpha)\}} e_i^T e^{A^T t} ~ dt \nonumber \\ 
& = &  \int_0^T \sum_{i=1}^m  \1_{\{f_i(t) > F^*(\alpha)\}} f_i(t) ~ dt \nonumber \\ 
& = & \int_{0}^{mT} F(t) \1_{\{F(t) > F^*(\alpha)\}} ~dt \nonumber \\
& = & \int_0^{\alpha} F^*(t) ~dt \end{eqnarray}  and the last step used Proposition \ref{prop:integral} and the assumption that $F^*(t)$ is strictly decreasing from the left at $t=\alpha$.

We finally argue that, under the assumption that $F^*(t)$ is decreasing neither from the left nor the right, the functions $v_i^{\rm opt, nlr}$ achieve the same cost. Indeed, let $(\alpha^l, \alpha^r)$ be the largest open interval containing $\alpha$ on which $F^*(t)$ is non-decreasing whose existence was guaranteed by Lemma \ref{lem:lrc}. That lemma tells us that $
\mu \{ t: F^*(t) > F^*(\alpha)\}  =  \alpha^l$.

Define $S_i' = S_i + (i-1)T$, where $S_i$ comes from the theorem statament and we translate it by $(i-1)T$ to make it so that $S_i' \subset [(i-1)T, iT]$. Further defining $S = \cup_{i=1}^m S_i'$ and proceeding similarly as before,  
\begin{eqnarray*} 
J_{v^{\rm opt, nlr}} & = & \int_{0}^{mT} F(t) \left( \1_{\{ F(t) > F^*(\alpha\}} + \1_{S} \right)\\
& = & \int_0^{\alpha^l} F^*(t) ~dt +  \mu(S) F^*(\alpha) \\ 
& = & \int_0^{\alpha^l} F^*(t) ~dt + (\alpha - \mu(\{F(t) > F^*(\alpha)\})) F^*(\alpha) \\ 
& = & \int_0^{\alpha^l} F^*(t) ~dt + (\alpha - \alpha^l ) F^*(\alpha) \\ 
& = & \int_0^{\alpha} F^*(t) ~dt,
\end{eqnarray*} where we relied on Proposition \ref{prop:cint2} in the second step.

To summarize, we have shown that under the appropriate assumptions, all three of $v_{i}^{\rm opt, r}(t)$ and $v_i^{\rm opt, l}(t)$ and $v_i^{\rm opt, nlr}(t)$ achieve the cost of 
\[ \int_0^{\alpha} F^*(t) ~dt. \]

We next show that, for any choice of functions $v_i(t)$, $i=1, \ldots, m$, $t \in [0,T]$, we will attain a cost that is upper bounded by $\int_0^{\alpha} F^*(t) ~dt.$. This part of the argument does not use assume anything about the behavior of $F^*(t)$ at $t=\alpha$. Let us adopt the notation $J_v$ for the cost corresponding to the functions $v_i(t), i = 1, \ldots, m$; our goal is to show that, as long as $v_i(t)$ are feasible, we have $J_v \leq \int_0^{\alpha} F^*(t)$. 

Indeed, for any feasible functions $v_i(t)$ we have that $v_i(t) \in [0,1]$ which means that
\begin{eqnarray} J_v & = & {\rm Tr} \int_0^T \sum_{i=1}^m v_i^2(t) e^{A t} b_i b_i^T e^{A^T t} ~dt \label{eq:first} \\ 
& \leq & {\rm Tr} \int_0^T \sum_{i=1}^m v_i(t) e^{A t} b_i b_i^T  e^{A^T t} ~dt ~~~~~~ \label{eq:square} \\ 
& = & \sum_{i=1}^m \int_0^T v_i(t) ~ {\rm Tr} \left(  e^{A t}  b_i b_i^T  e^{A^T t} \right) ~dt  \nonumber \\
& = & \sum_{i=1}^m \int_0^{T} v_i(t) f_i(t) ~dt \\
& = & \int_0^{mT} q(t) F(t) ~dt \label{eq:geq}
\end{eqnarray} where we define
\[ q(t) = \sum_{i=1}^m v_i(t - (i-1)T) 1_{t \in [(i-1)T, iT]}
\] 
Observing that 
\[ \int_0^{mT} |q(t)| =\sum_{i=1}^m \int_0^{T} |v_i(t)| \leq \alpha,\] by feasibility of $v_i(t)$, we can now apply  Lemma \ref{lemma:ineq} to Eq. (\ref{eq:geq}) and  obtain
\begin{eqnarray} 
J_v  
& \leq & \int_0^{\alpha} F^*(t) ~dt  \label{eq:last}
\end{eqnarray} where the final line followed by Eq. (\ref{eq:optcost}).

Putting it all together, we have thus shown that, under the appropriate assumptions, each of $v_i^{\rm opt, l}(t), v_i^{\rm opt,r}(t), v_i^{\rm opt, nlr}(t)$ are optimal. It remains to prove that these are the only optimal choices. For this, we must analyze the cases of equality in the above bounds. 

Observe that to achieve equality  we need to have equality
starting from the Equation of Eq. (\ref{eq:first}) through to Eq. (\ref{eq:last}). In particular, we must have equality in the in the application of Lemma  \ref{lemma:ineq}. But that lemma spells out conditions for equality. In particular, Lemma \ref{lemma:ineq}  forces $q(t)=\1_{\{t: F(t) \geq F^*(\alpha)\}}$ when $F^*(t)$ is decreasing from the right at $\alpha$ and $q(t)=\1_{\{t: F(t) > F^*(\alpha)\}}$ when it is decreasing from the left. By definition of $q(t)$, this is the same as having $v_i(t) = \1_{\{t: f_i(t) \geq F^*(\alpha)\}}$ and $v_i(t) = \1_{\{t: f_i(t) \geq F^*(\alpha)\}}$.  This concludes the proof for the case where $F^*(t)$ is either strictly decreasing to the right or from the left at $t=\alpha$. 

It only remains to analyze the cases of equality in the case where $F^*(t)$ is not strictly decreasing from either the left or the right at $t=\alpha$. First, observe that because $B$ has no zero columns and $e^{At}$ is always nonsingular, we have that 
\[ {\rm Tr}(e^{A t} b_i b_i^T e^{A^T t}) > 0,\] for all $i \in \{1, \ldots, m\}$ and $t \in [0,T]$. In particular, because $v_i(t) \in [0,1]$, the implication of this is that  to achieve equality going from Eq. (\ref{eq:first}) to Eq. (\ref{eq:square}), we must have that each $v_i(t) \in \{0,1\}$ almost everywhere. This implies the function $q(t)$ must be binary as well. 

Having established that, we apply the conditions for equality in the last item of Lemma \ref{lemma:ineq}. That lemma tells us that we must have 
\[ q(t) = \1_{\{t: F(t) > F^*(\alpha)\}} +  \1_{S},\] for some subset $S$ of the set $\{ t: F(t) = F^*(\alpha)\}$ with 
\[ \mu(S) = \alpha - \mu( \{ t: F^*(t) > F^*(\alpha)\}). \] This concludes the proof.

\end{proof}

\begin{remark} \label{remark1} Observe that   if all the $f_i(t)$ are not constant, then the solution of the relaxed actuator scheduling problem is unique. 
 Indeed, because each $f_i(t)$ is analytic everywhere and non-constant, the set $\{t \in [0,T]: f_i(t) = c\}$ is finite for all $c$, and consequently has Lebesgue measure zero. In particular, $\mu(\{t: F^*(t) = c\}) = 0$. This implies that $F^*(t)$ cannot be constant on any interval. By Lemma \ref{lem:lrc}, we conclude that $F^*(t)$ must either be strictly decreasing from the left or to the right at any point in $(0,mT)$. Theorem \ref{thm:mainthm} now implies that the solution of the relaxed actuator scheduling problem is unique. 

\end{remark}

\bibliographystyle{plain}
\bibliography{refs}

\section{Appendix}

We give the proofs of the propositions claimed in the body of the paper in this appendix. We stress again that these propositions are standard, and  for the standard notion of symmetric nonincreasing rearrangement they are widely used (e.g., see Chapter 3 of \cite{lieb2001analysis} which provides proof hints for many of them). For purposes of completeness, we spell out these proofs explicitly here.

\begin{proof}[Proof of Proposition \ref{prop:volume}]  We first argue that $f^*(x)$ is non-increasing. From the definition of rearrangement, this follows if we can show that the functions 
$\mathbbm{1}_{\{ y \in [0,a] :  f(y) > t \}^*}(x)$ are non-increasing in $x$ for any fixed $t$. In other words, we need to show that if $0 \leq x_1  \leq x_2 \leq a$ and 
\[ x_2 \in \{ y \in [0,a] :  f(y) > t \}^* \] then we have that
\[ x_1 \in \{ y \in [0,a] :  f(y) > t \}^*. \] But this immediately follows because the set $ \{ y \in [0,a] :  f(y) > t \}^*$ is an interval of the form $[0,l]$, so that if some $x_2$ belongs to it, then a smaller (but nonnegative) $x_1$ belongs to it as well. 

The remainder of Proposition \ref{prop:volume} will be proved by arguing that the level sets of $f(x)$ and $f^*(x)$ are almost rearrangements of each other, i.e., we will prove that the two sets  $\{ x: f(x) > u \}^*$ and $\{ x : f^*(x) > u \}$ differ in at most one element. 

Indeed, 
suppose $x' \notin \{ x: f(x) > u \}^*$. Then the supremum over all $t$ such that $x'$ belongs $\{ x: f(x) > t \}^*$ is upper bounded by $u$. Then,
\[ f^*(x') = \int_0^{+\infty} \1_{\{y \in [0,a] : f(y) > t\}}^*(x') ~dt \leq \int_0^u 1 ~dt \leq u.\] In particular, we conclude that $x' \notin \{ x : f^*(x) > u \}$.

On the other hand, suppose $x' \in \{ x: f(x) > u \}^*$. Since 
\[ 
\{ x: f(x) > u  \} = \cup_{\epsilon > 0} \{ x: f(x) > u + \epsilon \}, \] and the sets on the right hand side are nested, it follows that 
\[ \mu ( \{ x: f(x) > u \}  ) = \lim_{\epsilon \rightarrow 0} \mu (\{ x: f(x) > u + \epsilon \}). \] Now the set $\{ x: f(x) > u  \}^*$ is, by definition, an interval. A consequence of the above observation is that if $x'$ is not the right-endpoint of that interval, then 
there exists a positive  $\epsilon > 0$ such that 
$x' \in \{ x: f(x) > u+\epsilon \}^*$. It follows that $x'$ also belongs to all sets of the form $\{ x: f(x) > u_0 \}^*$ with $u_0 \leq u + \epsilon$. Therefore:
\[ f^*(x') = \int_0^{+\infty} \1_{\{y \in [0,a] : f(y) > t\}}^*(x') ~dt  \geq \int_0^{u+\epsilon} 1 ~dt = u + \epsilon,\] so that $x' \in \{ x : f^*(x) > u \}$.  To summarize, we have shown that $\{ x: f(x) > u \}^*$ and $\{ x : f^*(x) > u \}$ differ in at most one element (i.e., the right endpoint of the interval $\{ x: f(x) > u \}^*$). This proves that these two sets have the same measure.

Finally, we observe that 
 
\begin{eqnarray*} \mu (\{ x: f(x) = c\}) & =&  \lim_{\epsilon \rightarrow 0} \mu \{ x : f(x) > c+\epsilon\} - \mu \{x: f(x) > c-\epsilon\} \\ 
& = & \lim_{\epsilon \rightarrow 0} \mu \{ x : f^*(x) > c+\epsilon\} - \mu \{x: f^*(x) > c-\epsilon\} \\ 
& = & \mu (\{ x: f^*(x)=c) \}.
\end{eqnarray*} 
This completes the proof that level sets have the same measure. 

We conclude the proof by observing that $f^*(x)$ is measurable as a consequence of the fact that, as a consequence of the proof above,  $\{ x: f^*(x) > t \}$ is either a closed or a half-open interval.

\end{proof} 

The proof of the following propositions will require the so-called  ``layer cake representation,'' which states that if $g(x)$ is a nonnegative function, then 
\begin{eqnarray*} g(x) & = &  \int_0^{+\infty} \1_{\{ y: g(x) > y\}}(t) ~dt. \\
& = & 
\int_0^{+\infty} \1_{\{ y: g(x) \geq y\}}(t) ~dt.
\end{eqnarray*}

\begin{proof}[Proof of Proposition \ref{prop:onenorm}] 
Indeed,
\begin{eqnarray*} 
\int_0^a f(x) ~dx & = & \int_0^a \int_0^{+\infty} \1_{\{ y: f(x) > y\}}(t) ~dt ~dx \\ 
& = & \int_0^a \int_0^{+\infty}  \mu(\{ t: f(x) > t\}) ~dt ~dx \\ 
& = & \int_0^a \int_0^{+\infty}  \mu(\{ t: f^*(x) > t\}) ~dt ~dx \\ 
& = & \int_0^a \int_0^{+\infty} \1_{\{ y: f^*(x) > y\}}(t) ~dt ~dx \\ 
& = & \int_0^a f^*(x) ~ dx
\end{eqnarray*} where the third equality used Proposition \ref{prop:volume}. Finally, since both $f(x)$ and $f^*(x)$ are nonnegative by assumption, we can put the absolute values on the first and last integrand.
\end{proof} 

\begin{proof}[Proof of Proposition \ref{prop:hl}] Using the layer-cake representation,
\begin{eqnarray*}
\int_0^a f(x) g(x) ~ dx & = & \int_0^a \int_0^{+\infty} \int_0^{+\infty} \1_{\{y: f(x) > y\}}(t) \1_{\{y: g(x) > y\}}(s) ~ ds ~dt ~dx \\ 
& = &  \int_0^{+\infty} \int_0^{+\infty} \int_0^a \1_{\{y: f(x) > y\}}(t) \1_{\{y: g(x) > y\}}(s) ~ dx ~dt ~ds \\ 
& = & \int_0^{+\infty} \int_0^{+\infty} \mu( \{ x: f(x) > t \} \cap \{ x: g(x) > s \})  ~dt ~ds \\
& \leq & \int_0^{+\infty} \int_0^{+\infty}   \min \left( \mu( \{ x: f(x) > t \}, \mu  (\{ x: g(x) > s \}) \right)  ~dt ~ds,
\end{eqnarray*} since, in general,
\[ \mu (\mathcal{X} \cap \mathcal{Y}) \leq \min (\mu(\mathcal{X}), \mu(\mathcal{Y})). \] Now using Proposition \ref{prop:volume}, we have that
\begin{eqnarray*} \int_0^a f(x) g(x) ~ dx & \leq & \int_0^{+\infty} \int_0^{+\infty}  \min ( \mu( \{ x: f(x) > t \}^*, \mu  (\{ x: g(x) > s \}^*) ) ~dt ~ds \\
& = & \int_0^{+\infty} \int_0^{+\infty}  \mu ( \{ x:  f(x) > t \}^* \cap \{ x:  g(x) > s \}^* )  ~dt ~ds,
\end{eqnarray*} where the last step uses that the sets $\{x:  f(x) > t \}^*$ and $\{ x: g(x) > s \}^*$ are, by definition, both intervals of the form $[0,l]$. 

Using that the two  $\{ x: f(x) > u \}^*$ and $\{ x : f^*(x) > u \}$ differ in at most one element (as proven above), we move  to the representation in terms of indicator functions, 
\begin{eqnarray*} \int_0^a f(x) g(x) ~ dx & \leq &
\int_0^{+\infty} \int_0^{+\infty} \int_0^a \1_{\{y: f^*(x) > y\}}(t) \1_{\{y: g^*(x) > y\}}(s) ~ dx ~dt ~ds \\
& = & \int_0^a \int_0^{+\infty} \int_0^{+\infty}  ~ \1_{\{y : f^*(x) > y\}}(t) \1_{\{y: g^*(x) > y\}}(s) ~dt ~ds ~dx \\ 
& = & \int_0^a f^*(x) g^*(x) ~ dx
\end{eqnarray*} The inequality is proved, and it remains only to analyze conditions for equality. Inspecting the above proof, there is only one place where an inequality appears (namely in the fourth line of the proof). The condition for equality simply gives a necessary and sufficient condition for that line to be an equality.

Equivalently, we can use the version of the layer-cake representation with a non-strict inequality to obtain the same condition for equality but with nonscrict inequalities.

\end{proof} 

\begin{proof}[Proof of Proposition \ref{prop:bound}] We first show that the rearrangement of a constant is itself. Suppose $f(x)=c$ for all $x \in [0,a]$. Then, for all $x \in [0,a]$, 
\begin{eqnarray*} f^*(x) & =&  \int_0^{+\infty} \1_{\{ y \in [0,a]: f(y) > t \}^* }(x) ~dt \\
& =&  \int_0^{+\infty} \1_{\{ y \in [0,a]: c > t \}^* }(x) ~dt \\
& = & \int_0^{c} \1_{[0,a]^*}(x) ~ dt \\ 
& = & \int_0^{c} \1_{[0,a]}(x) ~dt \\ 
& = & c
\end{eqnarray*}
Next, if $f(x) \leq g(x)$ for all $x \in [0,a]$, then we have that for all $t$,
\[ \{ y : f(y) > t \} \subset \{ y: g(y) > t \},\]  so that for all $x,t$,
\[ \1_{\{ y: f(y) > t \} }(x) \leq \1_{\{ y: g(y) > t \} }(x),\] which, by definition of rearrangement, implies $f^*(x) \leq g^*(x)$.

\end{proof} 

\begin{proof}[Proof of Proposition \ref{prop:integral}] Suppose $f^*(x)$ is strictly decreasing to the right at $x=b$. Starting with the layer cake decomposition, 
\begin{eqnarray*} 
\int_0^a f(x) \1_{\{f(x) \geq f^*(b)\}} ~ dx & = & \int_0^a \int_0^{+\infty} \1_{\{ y: f(x) 1_{\{f(x) \geq f^*(b)\}} \geq y \}}(t) ~dt ~dx \\ 
& = & \int_0^a \int_0^{+\infty} \1_{\{ y: f(x) \geq \max(y,f^*(b))\}}(t) ~dt ~dx \\
& = &  \int_0^{+\infty} \int_0^a \1_{\{ y: f(x) \geq \max(y,f^*(b))\}}(t) ~dx ~dt  \\
& = & \int_0^a \int_0^{+\infty} \mu (\{ x: f(x) \geq \max(t,f^*(b))\})  ~dx ~dt \\ 
& = & \int_0^a \int_0^{+\infty} \mu (\{ x: f^*(x) \geq \max(t,f^*(b))\})  ~dx ~dt\\ 
\end{eqnarray*} where we used Proposition \ref{prop:volume} in the last step.

Continuing where we left off,
\begin{eqnarray*}
\int_0^a f(x) \1_{\{f(x) \geq f^*(b)\}} ~ dx & = &
\int_0^{+\infty} \int_0^a  \mu (\{x :  f^*(x) \geq \max(t,f^*(b))\})  ~dx ~dt \\ 
& = &  \int_0^{+\infty} \int_0^a \1_{\{ y: f^*(x) \geq \max(y,f^*(b))\}}(t)  ~dx ~dt \\
& = & \int_0^a \int_0^{+\infty} \1_{\{y:  f^*(x) \geq \max(y,f^*(b))\}}(t) ~dt ~dx \\
& = & \int_0^b \int_0^{+\infty} \1_{\{ y: f^*(x) \geq y\} }(t) ~ dt ~ dx \\ 
& = & \int_0^b f^*(x) ~ dx,
\end{eqnarray*} where the third equality used that $f^*(x)$ is strictly decreasing on the right at $b$ to replace the inequality $f^*(x) \geq f^*(b)$ with $x \leq b$.  This concludes the proof for the case when $f^*(x)$ is strictly decreasing to the right.

Now suppose $f^*(x)$ is strictly decreasing from the left at $x=b$. The corresponding identity is established with a nearly identical argument: 
\begin{eqnarray*} 
\int_0^a f(x) \1_{\{f(x) > f^*(b)\}} ~ dx & = & \int_0^a \int_0^{+\infty} \1_{\{ y: f(x) 1_{\{f(x) > f^*(b)\}} > y \}}(t) ~dt ~dx \\ 
& = & \int_0^a \int_0^{+\infty} \1_{\{ y: f(x) > \max(y,f^*(b))\}}(t) ~dt ~dx \\
& = &  \int_0^{+\infty} \int_0^a \1_{\{ y: f(x) > \max(y,f^*(b))\}}(t) ~dx ~dt  \\
& = & \int_0^a \int_0^{+\infty} \mu (\{ x: f(x) > \max(t,f^*(b))\})  ~dx ~dt \\ 
& = & \int_0^a \int_0^{+\infty} \mu (\{ x: f^*(x) > \max(t,f^*(b))\})  ~dx ~dt \\ 
& = &  \int_0^{+\infty} \int_0^a \1_{\{ y: f^*(x) > \max(y,f^*(b))\}}(t)  ~dx ~dt \\
& = & \int_0^a \int_0^{+\infty} \1_{\{y:  f^*(x) > \max(y,f^*(b))\}}(t) ~dt ~dx \\
& = & \int_0^b \int_0^{+\infty} \1_{\{ y: f^*(x) >  y\} }(t) ~ dt ~ dx \\ 
& = & \int_0^b f^*(x) ~ dx,
\end{eqnarray*} where we used that $f^*(x)$ is strictly decreasing from the left at $x=b$ to replace the inequality $f^*(x) > f^*(b)$ with $x < b$. 
\end{proof}

\begin{proof}[Proof of Proposition \ref{prop:cint2}] That the second term under the integral integrates to $\delta f^*(b) \mu(S)$ is clear. We only sketch the rest of the proof, as it is almost identical to the second part of Proposition \ref{prop:integral}. Indeed, we proceed exactly as in Proposition \ref{prop:integral} until the penultimate line, where the inequality $f^*(t) > f^*(b)$ is now replaced with $t \leq b^l$ or $t<b^l$.
\end{proof}

\end{document}